\crefname{hypothesis}{Hypothesis}{Hypotheses}
\title{
Shortest Paths in Graphs of Convex Sets\thanks{
Submitted to the editors DATE.
\funding{
National Science Foundation, award no. EFMA-1830901.
Department of the Navy, Office of Naval Research, award no. N00014-18-1-2210 and award no. N00014-17-1-2699.
}
}
}
\author{
Tobia Marcucci\thanks{
Department of Electrical Engineering and Computer Science, Massachusetts Institute of Technology, Cambridge, MA
 (\email{tobiam@mit.edu}, \email{umnbrgr@mit.edu}, \email{parrilo@mit.edu}, \email{russt@mit.edu}).}
\and Jack Umenberger\footnotemark[2]
\and Pablo A. Parrilo\footnotemark[2]
\and Russ Tedrake\footnotemark[2]
}
\newcommand{\R}{\mathbb R}
\newcommand{\cB}{\mathcal B}
\newcommand{\cD}{\mathcal D}
\newcommand{\cI}{\mathcal I}
\newcommand{\cJ}{\mathcal J}
\newcommand{\cL}{\mathcal L}
\newcommand{\cN}{\mathcal N}
\newcommand{\cS}{\mathcal S}
\newcommand{\cX}{\mathcal X}
\newcommand{\cY}{\mathcal Y}
\newcommand{\cV}{\mathcal V}
\newcommand{\cE}{\mathcal E}
\newcommand{\cP}{\mathcal P}
\newcommand{\cK}{\mathcal K}
\newcommand{\bzero}{\bm 0}
\newcommand{\bone}{\bm 1}
\newcommand{\ba}{\bm a}
\newcommand{\bb}{\bm b}
\newcommand{\bc}{\bm c}
\newcommand{\bd}{\bm d}
\newcommand{\bp}{\bm p}
\newcommand{\bq}{\bm q}
\newcommand{\br}{\bm r}
\newcommand{\bs}{\bm s}
\newcommand{\bv}{\bm v}
\newcommand{\bx}{\bm x}
\newcommand{\by}{\bm y}
\newcommand{\bz}{\bm z}
\newcommand{\bA}{\bm A}
\newcommand{\bB}{\bm B}
\newcommand{\bC}{\bm C}
\newcommand{\bR}{\bm R}
\newcommand{\bZ}{\bm Z}
\newcommand{\btheta}{\bm \theta}
\DeclareMathOperator{\conv}{conv}
\DeclareMathOperator{\epi}{epi}
\DeclareMathOperator{\cl}{cl}
\DeclareMathOperator{\ext}{ext}
\newcommand{\inc}{^\mathrm{in}}
\newcommand{\out}{^\mathrm{out}}
\newcommand{\tand}{\mathrm{\ and \ }}
\newcommand{\tforall}{\mathrm{\ for \ all \ }}
\newcommand{\minimize}{\mathrm{minimize}}
\newcommand{\maximize}{\mathrm{maximize}}
\newcommand{\subjectto}{\mathrm{subject \ to}}
\begin{document}

\maketitle

\begin{abstract}
Given a graph, the shortest-path problem requires finding a sequence of edges with minimum cumulative length that connects a source vertex to a target vertex.
We consider a variant of this classical problem in which the position of each vertex in the graph is a continuous decision variable constrained in a convex set, and the length of an edge is a convex function of the position of its endpoints.
Problems of this form arise naturally in
many areas, from motion planning of autonomous vehicles to optimal control of hybrid systems.
The price for such a wide applicability is the complexity of this problem, which is easily seen to be NP-hard.
Our main contribution is a strong and lightweight mixed-integer convex formulation based on perspective operators, that makes it possible to efficiently find globally optimal paths in large graphs and in high-dimensional spaces.
\end{abstract}

\begin{keywords}
Shortest-path problem, graph problems with neighborhoods, mixed-integer convex programming, perspective formulation, optimal control.
\end{keywords}

\begin{MSCcodes}
52B05, 90C11, 90C25, 90C35, 90C57, 93C55, 93C83.
\end{MSCcodes}

\section{Introduction}
\label{sec:intro}
\begin{figure}[t]
\centering
\includegraphics[height=3.2cm]{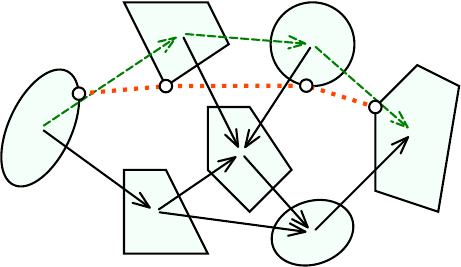}
\caption{
Example of an SPP in GCS.
The source set is on the left and the target set is on the right.
The graph edges are arrows, and the shortest path is shown in dashed green.
The dotted red lines connect the optimal positions of the vertices along the shortest path.
}
\label{fig:toy_example}
\end{figure}

The Shortest-Path Problem (SPP) is one of the most important and ubiquitous problems in combinatorial optimization.
In its single-source single-target version, this problem asks for a path of minimum length connecting two prescribed vertices of a graph, where the length of a path is defined as the sum of the lengths of its edges.
Typically, the edge lengths are fixed scalars, given as problem data, and the assumptions made on their values have a dramatic impact on the problem complexity~\cite[Chapters~6 to~8]{schrijver2003combinatorial}.
In this paper we introduce the SPP in Graph of Convex Sets (GCS), a variant of the SPP in which the edge lengths are convex functions of continuous variables representing the position of the vertices (see Figure~\ref{fig:toy_example}).
More precisely, a GCS is a directed graph in which each vertex is paired with a convex set.
The spatial position of a vertex is a continuous variable, constrained to lie in the corresponding convex set.
The length of an edge is a given convex function of the position of the vertices that this edge connects.
When looking for a path of minimum length in a GCS, we then have the extra degree of freedom of optimizing the position of the vertices visited by the path.
According to the literature, this problem could also be classified as an SPP \emph{with neighborhoods}; we call it SPP in GCS to highlight the crucial role that convexity plays in the developments of this paper.

Many problems of practical interest can be formulated as SPPs in GCS: for some of those the convex sets and the edge-length functions are naturally suggested by the application, for others the construction of the GCS requires more thinking.
As an example of the former class of problems, scheduling the flight of a drone with limited batteries is immediately cast as an SPP in GCS like the one in Figure~\ref{fig:toy_example}.
The start region is on the left, the goal region is on the right, and the remaining regions can be used for recharging.
Pairs of regions that are close enough for the drone to fly between are connected by an edge.
The objective is to minimize the overall length of the flight.
Optimal control of discrete-time hybrid dynamical systems~\cite{bemporad1999control} is a main application that we target in this paper, and is an example of a problem whose formulation as an SPP in GCS is nontrivial.
In this case we let the convex sets live in the joint state and control space of the dynamical system.
Each discrete time step corresponds to an edge transition in the GCS, and the edge lengths quantify, e.g., the energy consumed to move between states (a length that is infinite if the motion is not compatible with the system dynamics).
This is explained in detail in Section~\ref{sec:optimal_control}.

\subsection{Contributions}

The following are the main contributions of this article.

\subsubsection*{Problem statement (Section~\ref{sec:statement})} 
The SPP in GCS represents an unexplored class of problems at the interface of combinatorial and convex optimization.
It lends itself to a simple problem statement and, at the same time, it is a versatile framework that includes as special cases many problems of practical relevance.

\subsubsection*{Mixed-integer convex formulation (Section~\ref{sec:micp})} 
The SPP in GCS is easily seen to be NP-hard (Section~\ref{sec:complexity}).
Our main contribution is the formulation of this problem as a strong and lightweight Mixed-Integer Convex Program (MICP).
This program extends in a natural way the classical network-flow formulation of the SPP,
and it allows us to efficiently find shortest paths in large graphs (hundreds of vertices) and high-dimensional spaces (tens of dimensions).
In addition, the design principles of this MICP can be applied to improve existing mixed-integer formulations of other graph problems with neighborhoods, which are limited to small graphs and sets in two or three dimensions (see Appendix~\ref{sec:extensions}).

\subsubsection*{Set-based convex relaxation of bilinear constraints (Section~\ref{sec:relaxation})} 
The main building block of our MICP is a tight and compact convex relaxation for a class of bilinear constraints that emerge naturally in our problem.
This relaxation is \emph{set based}, in the sense that it does not rely on the explicit constraints that define the sets in our GCS, but it works directly with their abstract set representations.
This makes our MICP usable even when these sets are black boxes accessible only through a separation oracle.
This relaxation is similar in spirit to the Lov\'asz-Schrijver one~\cite{lovasz1991cones}, and is based on perspective operators (a popular tool in mixed-integer optimization~\cite{ceria1999convex,stubbs1999branch,frangioni2006perspective,gunluk2010perspective}).

\subsubsection*{Control applications (Section~\ref{sec:optimal_control})} 
Computation times are the main limitation to a widespread application of mixed-integer optimization in control of hybrid systems~\cite{naik2017embedded,stellato2018embedded,marcucci2020warm}.
Our shortest-path formulation of these problems is substantially different from the state of the art~\cite{moehle2015perspective,marcucci2019mixed}, as we do not use binary variables to encode the discrete mode in which the system is at each time step but, instead, we use them to select the transitions between modes.
This different parameterization yields slightly larger but much stronger MICPs that, in our computational experiments, are orders of magnitude faster to solve.

\subsection{Related graph problems}
\label{sec:related_works}
In this subsection we overview a few variants of classical graph problems that are closely related to our problem formulation.

\subsubsection*{Graph problems with neighborhoods}
\label{sec:graph_problems_with_neighborhoods}

Graph problems where the vertices are allowed to move within corresponding sets are often called problems with neighborhoods.
The SPP with neighborhoods has been analyzed in~\cite{disser2014rectilinear} under stringent assumptions that ensure polynomial-time solvability: the sets are disjoint rectilinear polygons in the plane, and the edge lengths penalize the $\cL_1$ distance between the vertices.
The applications we target with this paper, however, do not verify any of these hypotheses.
A special case of the SPP with neighborhoods is the touring-polygon problem, which asks for the shortest path between two points that visits a set of polygons in a given order~\cite{dror2003touring}.
Our problem differs from this in that our sets are convex and the order in which we visit them is not predefined.
Other problems akin to the touring polygon, but substantially different from the SPP in GCS, are the safari, the zookeeper, and the watchman route; see~\cite[Part~IV]{li2011euclidean} and the references therein.

The Traveling-Salesman Problem (TSP) and the Minimum-Spanning-Tree Problem (MSTP) are the two combinatorial problems that have been studied most extensively in their variants with neighborhoods~\cite{arkin1994approximation,yang2007minimum}.
Exact algorithms for these generally rely on expensive mixed-integer nonconvex optimization~\cite{gentilini2013travelling,blanco2017minimum,burdick2021multi}, and do not scale beyond two or three dimensions.
Although the techniques we propose in this paper are particularly well suited to the structure of the SPP, they can be used without modifications to formulate other graph problems with neighborhoods as very tractable MICPs (see Appendix~\ref{sec:extensions}).

\subsubsection*{Graph problems with clusters}

Generalized Steiner problems~\cite{dror2000generalizedsteiner} (otherwise known as generalized network-design problems~\cite{feremans2003generalized,pop2012generalized}) can be thought as the discrete counterpart of the graph problems with neighborhoods: the vertex set is partitioned into clusters and the problem constraints are expressed in terms of these clusters, rather than the original vertices.
A clustered version of the SPP has been presented in~\cite{li1995shortest}: each vertex in the graph is assigned a nonnegative weight, and the total vertex weight incurred by the shortest path within each cluster must not exceed a given value.
The problem we analyze in this paper can be approximated as an SPP with clusters in a natural way.
In low-dimensional spaces, this approximation can be computationally efficient and sufficiently accurate for practical applications.
However, this strategy is infeasible in high dimensions, where covering a volume of space with a cluster requires an exponential number of points.

\subsubsection*{Euclidean shortest paths}

Another related variant of the SPP is the Euclidean SPP~\cite{li2011euclidean}, where we look for a continuous path that connects two points and avoids given polygonal obstacles.
In two dimensions, this problem can be reduced to a discrete search and is solvable in polynomial time~\cite{lozano1979algorithm}.
In three dimensions or more the problem is NP-hard~\cite[Theorem~2.3.2]{canny1987new}, and common algorithms rely on a grid discretization of the space~\cite{kim2003discrete}.
More recently, a moment-based technique that handles semialgebraic obstacles has been proposed in~\cite{khadir2020piecewise}.

\section{Problem statement}
\label{sec:statement}
We start with a formal statement of the SPP in GCS.
Let $G := (\cV, \cE)$ be a directed graph with vertex set $\cV$ and edge set $\cE$.
For each vertex $v \in \cV$, we have a nonempty compact convex set $\cX_v \subset \R^n$ and a point $\bx_v$ contained in it.\footnote{The results presented in this paper are easily extended to the case in which the sets $\cX_v$ do not have common dimension $n$.}
The length of an edge $e=(u,v) \in \cE$ is determined by the location of the points $\bx_u$ and $\bx_v$ via the expression $\ell_e(\bx_u,\bx_v)$.
The \emph{edge length} function $\ell_e$ takes values in $\R_{\geq 0} \cup \{\infty\}$ and is assumed to be proper, closed, and convex.
Note that, despite its name, we do not assume $\ell_e$ to be a valid metric, and properties like symmetry or the triangle inequality are not required to hold.
Given a source vertex $s$ and a target vertex $t  \neq s$, an $s$-$t$ path $p$ is a sequence of distinct vertices $(v_0, \ldots, v_K)$ such that $v_0=s$, $v_K=t$,  and $(v_k, v_{k+1})\in \cE$ for all $k=0, \ldots, K-1$.
We denote with $\cE_p := \{(v_0, v_1), \ldots, (v_{K-1}, v_K)\}$ the set of edges traversed by this path, and with $\cP$ the set of all $s$-$t$ paths in the graph $G$.
The SPP in GCS is then stated as
\begin{subequations}
\label{eq:spp_in_gcs}
\begin{tcolorbox}[ams align]
\label{eq:spp_objective}
\minimize \quad & \sum_{e = (u,v) \in \cE_p} \ell_e(\bx_u,\bx_v) \\
\subjectto \quad
\label{eq:spp_path}
& p \in \mathcal P, \\
\label{eq:spp_v}
& \bx_v \in \cX_v, && \forall v \in p.
\end{tcolorbox}\noindent
\end{subequations}
The decision variables are the discrete path $p$ and the continuous values $\bx_v$.
The cost~\eqref{eq:spp_objective} minimizes the total path length.
Constraint~\eqref{eq:spp_v} is enforced only for the vertices visited by the path, since the positions of the other vertices are irrelevant.

The edge length used in Figure~\ref{fig:toy_example} is the Euclidean distance:
\begin{align}
\label{eq:2norm}
\ell_e (\bx_u, \bx_v) := \|\bx_v - \bx_u\|_2.
\end{align}
With this choice the polygonal line connecting the points $\bx_v$ along a shortest path is as straight as possible, perfectly straight if $(s,t) \in \cE$.
Conversely, if the edge length is the Euclidean distance squared,
\begin{align}
\label{eq:2norm_squared}
\ell_e (\bx_u, \bx_v) := \|\bx_v - \bx_u\|_2^2,
\end{align}
straight trajectories may be suboptimal if they require long steps  $\bx_v - \bx_u$.
Note also that by letting $\ell_e$ take infinite value outside a convex set $\cX_e$ we are effectively enforcing the edge constraint $(\bx_u,\bx_v) \in \cX_e$.
This will be used in Section~\ref{sec:optimal_control} to formulate optimal-control problems as SPPs in GCS: there the edge constraints will couple the vertex positions according to the system dynamics.

\section{Complexity analysis}
\label{sec:complexity}
If we fix the vertex positions $\bx_v$, problem~\eqref{eq:spp_in_gcs} simplifies to the classical SPP with scalar nonnegative edge lengths, which is easily solvable using, e.g., Linear Programming (LP).
Similarly, if we fix the path $p$, problem~\eqref{eq:spp_in_gcs} simplifies to a convex program that can be efficiently solved for most convex sets $\cX_v$ and edge lengths $\ell_e$.
In this section we show that the simultaneous optimization of the vertex positions and the path makes the SPP in GCS an NP-hard problem.

Recall that an $s$-$t$ path $p := (v_0, \ldots, v_K)$ is said to be Hamiltonian if it visits every vertex in the graph (i.e., if $K = |\cV|-1$), and a graph is Hamiltonian if it contains such a path.
The Hamiltonian-Path Problem (HPP) asks if a given graph is Hamiltonian.
As an example, the graph in Figure~\ref{fig:toy_example} is not Hamiltonian.

\begin{theorem}
\label{th:complexity}
The SPP in GCS~\eqref{eq:spp_in_gcs} is NP-hard.
\end{theorem}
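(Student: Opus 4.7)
The plan is to exhibit a polynomial-time reduction from the Hamiltonian-Path Problem to problem~\eqref{eq:spp_in_gcs}. Since the paper has just recalled the HPP, which is well known to be NP-complete even in the directed case with pre-specified endpoints $s$ and $t$, this reduction will immediately imply that the SPP in GCS is NP-hard.

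Given an instance of the HPP, namely a directed graph $G = (\cV, \cE)$ with distinguished source $s$ and target $t$, I would build an instance of~\eqref{eq:spp_in_gcs} on the same graph $G$ by choosing the ambient dimension $n = 1$ and setting
\begin{align*}
\cX_s := \{0\}, \qquad \cX_t := \{N-1\}, \qquad \cX_v := [0, N-1] \tfor v \in \cV \setminus \{s,t\},
\end{align*}
where $N := |\cV|$. All sets are compact and convex. For every edge $e = (u,v) \in \cE$ I would set
\begin{align*}
\ell_e(\bx_u, \bx_v) := (\bx_v - \bx_u - 1)^2,
\end{align*}
which is a proper, closed, convex, nonnegative function of $(\bx_u, \bx_v)$. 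The construction is clearly polynomial in the size of $G$.

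The central computation is to evaluate, for a fixed $s$-$t$ path $p = (v_0, \dots, v_K)$ in $G$, the minimum of $\sum_{k=0}^{K-1}(\bx_{v_{k+1}} - \bx_{v_k} - 1)^2$ subject to $\bx_{v_0} = 0$, $\bx_{v_K} = N-1$, and $\bx_{v_k} \in [0,N-1]$ for $0 < k < K$. Writing $d_k := \bx_{v_{k+1}} - \bx_{v_k}$, the increments satisfy $\sum_{k=0}^{K-1} d_k = N-1$, and by convexity and symmetry the unique unconstrained minimizer is $d_k = (N-1)/K$ for every $k$, which is admissible since the resulting $\bx_{v_k} = k(N-1)/K$ lie in $[0, N-1]$. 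The optimal cost of the path is therefore
\begin{align*}
K \left( \frac{N-1}{K} - 1 \right)^2 = \frac{(N-1-K)^2}{K}.
\end{align*}
This quantity vanishes if and only if $K = N - 1$, and since by definition an $s$-$t$ path uses distinct vertices, $K = N-1$ is equivalent to $p$ being a Hamiltonian $s$-$t$ path of $G$.

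Consequently the optimal value of~\eqref{eq:spp_in_gcs} on the constructed GCS is zero precisely when $G$ admits a Hamiltonian $s$-$t$ path, and strictly positive otherwise. A polynomial-time algorithm for the SPP in GCS would therefore decide the HPP in polynomial time, proving Theorem~\ref{th:complexity}. The only nontrivial step is the closed-form minimization of the path cost, but this is a one-line convex calculation; the rest of the reduction is routine bookkeeping, and I do not anticipate any real obstacle.
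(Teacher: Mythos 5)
Your proposal is correct and follows essentially the same route as the paper: a reduction from the Hamiltonian-Path Problem on the same graph, with one-dimensional interval sets and a squared edge length whose per-path optimum is computed in closed form by spreading the points evenly. The only difference is cosmetic---you shift the quadratic to $(\bx_v-\bx_u-1)^2$ on $[0,N-1]$ so that the optimal value is exactly zero iff a Hamiltonian path exists, whereas the paper uses $\|\bx_v-\bx_u\|_2^2$ on $[0,1]$ and observes that the resulting path length $1/K$ is minimized by maximizing $K$.
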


\begin{proof}
We show that the HPP is polynomial-time reducible to the SPP in GCS.
The thesis will then follow since the HPP is NP-complete~\cite{karp1972reducibility}.
We construct an SPP in GCS that shares the same graph $G$ as the given HPP.
We let the source $\cX_s := \{0\}$ and target $\cX_t := \{1\}$ sets be singletons on the real line, while we define $\cX_v := [0,1]$ for all $v \neq s, t$.
The length of each edge is the Euclidean distance squared~\eqref{eq:2norm_squared}.
Given these choices, the optimal positioning of the vertices for a fixed path $p$ is given by $\bx_{v_k} = k /K$ for $k=0, \ldots, K$.
The length of this path is $K (1 / K)^2 = 1 / K$.
We conclude that an optimal path is one for which $K$ is maximized, and is Hamiltonian if and only if $G$ is Hamiltonian.
This reduction operates in polynomial time.
\end{proof}

This simple reduction shows that, even if the convex sets $\cX_v$ are one-dimensional intervals, the SPP in GCS can be a hard problem.
Nonetheless, one might wonder if additional assumptions on the problem data could turn the SPP in GCS into a problem that is solvable in polynomial time.
\begin{itemize}
\item
What if the graph $G$ is acyclic?
In case of an acyclic graph the HPP is solvable in linear time~\cite[Section~4.4]{ahuja1993network}, and our hardness proof is not valid.
\item
What if the sets $\cX_v$ are disjoint?
In fact, some graph problems with neighborhoods can be solved more efficiently  in case of disjoint neighborhoods~\cite{disser2014rectilinear,burdick2021multi}.
\item
What if the edge lengths $\ell_e$ are positively homogeneous?
An edge length like the Euclidean distance~\eqref{eq:2norm} could not be used in our reduction since it would not force the optimal path $p$ to visit as many vertices as possible.
\end{itemize}

It turns out that all these questions have a negative answer.
This is summarized in the following theorem, whose proof is omitted since it is a long and relatively straightforward adaptation of the complexity analysis of the Euclidean SPP from~\cite{canny1987new}.

\begin{theorem}
\label{th:complexity2}
Assume that the graph $G$ is acyclic, the sets $\cX_v$ are disjoint, and the edge lengths $\ell_e$ are positively homogeneous.
The SPP in GCS~\eqref{eq:spp_in_gcs} is NP-hard.
\end{theorem}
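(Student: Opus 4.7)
The plan is to reduce from the three-dimensional Euclidean shortest-path problem with polyhedral obstacles, shown NP-hard by Canny and Reif~\cite{canny1987new}. Their reduction maps a 3SAT instance to an environment in $\R^3$ consisting of polyhedral obstacles together with a source $s$, target $t$, and threshold $L$, such that an obstacle-avoiding polygonal $s$-$t$ path of length at most $L$ exists if and only if the 3SAT instance is satisfiable. A classical observation, which their analysis also exploits, is that any shortest obstacle-avoiding path is polygonal and its interior vertices lie on edges of the polyhedral obstacles. Consequently, each geodesic is determined by an ordered \emph{sequence} of obstacle edges visited, which matches exactly the structure of a path in a GCS whose vertices correspond to those edges.

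Concretely, I would build a GCS with one vertex per obstacle edge, one vertex for $s$, and one for $t$. The convex set $\cX_v$ associated with an obstacle-edge vertex is precisely that edge (a compact convex segment in $\R^3$); the source and target sets are singletons. I would insert a directed GCS edge from $u$ to $v$ whenever there exist points in $\cX_u$ and $\cX_v$ joined by a straight segment that does not pierce the interior of any obstacle, orienting such edges consistently with the left-to-right layered arrangement of the Canny--Reif gadgets. The edge length $\ell_e(\bx_u, \bx_v) = \|\bx_v - \bx_u\|_p$ is positively homogeneous in the pair $(\bx_u, \bx_v)$, so every $\cL_p$ norm satisfies the hypothesis.

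Three verifications then need to be carried out. For acyclicity, I would argue that visibility edges directed against the layer order can be discarded without affecting the optimum, since any path using such an edge exceeds the Canny--Reif threshold $L$; a topological sort of the gadget layers then yields a DAG. For disjointness, distinct obstacle edges are already disjoint line segments by construction, and the finitely many pairs sharing an endpoint can be made disjoint by a truncation whose effect on the cost is smaller than the satisfiable-versus-unsatisfiable gap. For correctness, the bijection between geodesics in the obstacle environment and $s$-$t$ paths in the GCS preserves path length, so the GCS optimum coincides with the Euclidean optimum and therefore decides 3SAT.

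The main obstacle I expect is justifying the removal of backward visibility edges required for acyclicity: the Canny--Reif environment was not built with this property in mind, so I need quantitative bounds showing that any backward hop between gadgets inflates the cost past $L$. I would address this by scaling up the inter-gadget distances so that every backtracking detour dominates the satisfiable--unsatisfiable margin. A secondary subtlety, flagged by the hypothesis ``any $\cL_p$ distance metric'', is recertifying this margin under non-Euclidean norms: the Canny--Reif gap is constant but the norm-equivalence constants depend on $p$ and on the ambient dimension, and I would handle this by further rescaling the gadgets so that the effective gap remains positive for each fixed $p$.
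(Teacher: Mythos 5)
Your reduction has a genuine gap at its core: the GCS formulation cannot enforce visibility between the \emph{chosen} points. You insert a directed edge $(u,v)$ whenever there \emph{exist} mutually visible points in $\cX_u$ and $\cX_v$, and then set $\ell_e(\bx_u,\bx_v)=\|\bx_v-\bx_u\|_p$. But once the edge is present, the continuous optimizer is free to place $\bx_u$ and $\bx_v$ at \emph{any} pair of points in those segments, including pairs whose connecting segment pierces an obstacle. The set of mutually visible pairs $(\bx_u,\bx_v)$ is not convex in general, so it can be imposed neither as a constraint $(\bx_u,\bx_v)\in\cX_e$ nor by letting $\ell_e$ be $+\infty$ off that set while staying convex. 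Consequently the claimed length-preserving bijection between geodesics and GCS paths fails: the GCS optimum can strictly undercut the Euclidean shortest-path length by taking ``shortcuts'' through obstacles, and the threshold comparison against $L$ is no longer decided correctly. Your secondary concerns (backward edges, shared endpoints, norm-equivalence constants) are legitimate but downstream of this.

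The paper's sketch avoids the issue by describing the \emph{free space} rather than the obstacle boundary: it retraces the Canny--Reif 3-SAT reduction and takes the convex sets $\cX_v$ to be the two-dimensional openings (slits) in the stacked plates, with the source on top and the target at the bottom. Because consecutive plates are separated by obstacle-free space, \emph{every} pair of points lying in openings of adjacent layers is mutually visible, so no visibility constraint needs to be encoded at all; the stacked layout simultaneously delivers acyclicity and disjointness for free, and the $\cL_p$ edge lengths are positively homogeneous. If you want to salvage your obstacle-edge viewpoint, you would have to restrict attention to an environment with exactly this layered structure and argue that the sets you pair with vertices are ``fully mutually visible'' across each consecutive pair of layers --- which is precisely what choosing the openings accomplishes directly.
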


\section{Convex-analysis background}
\label{sec:tools}
This section introduces two basic concepts in convex analysis: perspective operators (homogenization) and valid inequalities (duality).
These are the main tools that we will use in the design and the analysis of our MICP.
Our goal here is to set the notation and collect some important definitions and properties; for a comprehensive introduction to these topics see~\cite[Parts~II and~III]{rockafellar1970convex} or~\cite[Chapters~III and~IV]{hiriart2013convex}.

\subsection{Perspective operators}
\label{sec:perspective}

There is a natural construction that maps a set in $n$ dimensions to a cone in $n+1$ dimensions.
This is sometimes called \emph{homogenization}, or the \emph{cone over} the set.
Here we call it \emph{perspective}, for coherence with the name commonly used for the same operation applied to functions~\cite[Section~IV.2.2]{hiriart2013convex}.

\begin{definition}
\label{def:perspective_cone}
We define the \emph{perspective} of  a closed convex set $\cX \subseteq \R^n$ as
$$
\tilde \cX := \cl \{(\bx, \lambda) : \lambda \geq 0, \ \bx \in \lambda \cX \},
$$
where $\cl$ denotes the closure of the set.
\end{definition}

\begin{remark}
\label{rem:bounded_set}
The closure operation  in Definition~\ref{def:perspective_cone} is unnecessary for bounded sets $\cX$.
While, when the set  $\cX$ is unbounded, it ensures that the perspective $\tilde \cX$ contains all its limit points with $\lambda = 0$~\cite[Theorem~8.2]{rockafellar1970convex}.
\end{remark}

Importantly, the perspective operation preserves convexity, and the set $\tilde \cX$ is a closed convex cone.
The next example shows that the perspective of a set represented in conic form can be computed very easily.

\begin{example}
\label{ex:perspective_conic}
Let $\cX := \{\bx : \bA \bx + \bb \in \cK \}$, for some matrix $\bA$, vector $\bb$, and closed convex cone $\cK$.
We have $\tilde \cX = \{(\bx, \lambda) : \lambda \geq 0, \bA \bx + \bb \lambda \in \cK \}$.
\end{example}

This example has great practical relevance since, informally, it tells us that if a conic-optimization solver can handle the set $\cX$ then it can also handle its perspective $\tilde \cX$.
For instance, we see that the perspective of a polyhedral, ellipsoidal, and spectrahedral set can be represented through a set of linear, second-order-cone, and semidefinite constraints, respectively.
More in general, if the set $\cX$ is bounded, the formal equivalence of optimizing over $\cX$ and its perspective $\tilde \cX$ can be established using the ellipsoid method~\cite[Chapter~4]{grotschel2012geometric}, since the separation problems for these two sets are easily seen to be equivalent.

The next definition uses the construction from~\cite[Page~39]{rockafellar1970convex} to describe what the perspective operation does to a convex function.

\begin{definition}
\label{def:perspective_function}
We define the \emph{perspective} of a closed convex function $f: \R^n \rightarrow \R \cup \{\infty\}$ as the unique function $\tilde f$ whose epigraph is the perspective of the epigraph of $f$, i.e.,
$$
\tilde f (\bx, \lambda) := \inf \{\sigma : (\bx, \sigma, \lambda) \in \widetilde{\epi f}\},
$$
where $\epi f := \{(\bx, \sigma) : f(\bx) \leq \sigma\}$.\footnote{More precisely the sets $\epi \tilde f$ and $\widetilde{\epi f}$ are isomorphic:
$\epi \tilde f := \{(\bx, \lambda, \sigma): (\bx, \sigma, \lambda) \in \widetilde{\epi f}\}$.
}
\end{definition}

Since its epigraph is closed and convex, the perspective function $\tilde f$ is closed and jointly convex in $\bx$ and $\lambda$.

\begin{remark}
\label{rem:perspective_values}
For $\lambda > 0$, noticing that $\lambda \epi f = \{(\bx, \sigma) : \lambda f(\bx / \lambda) \leq \sigma\}$, we have that the perspective function is $\tilde f(\bx, \lambda) = \lambda f(\bx / \lambda)$.
For $\lambda < 0$, we immediately see that $\tilde f(\bx, \lambda) = \infty$.
The behavior for $\lambda = 0$ is more complicated~\cite[Corollary~8.5.2]{rockafellar1970convex}, but for the scope of this paper it suffices to note that if $f$ is proper then, by the closedness of $\tilde f$, we must have $\tilde f(\bzero, 0) = 0$.
\end{remark}

Although Definition~\ref{def:perspective_function} might seem unsuitable for numerical optimization, the perspectives of most common functions $f$ can be minimized using standard solvers.
In fact, given a conic representation of the epigraph of $f$, we can compute the epigraph of $\tilde f$ as in Example~\ref{ex:perspective_conic}, and minimize $\tilde f$ using a slack variable.

The next two examples draw further useful parallels between the perspective operation applied to sets and to functions.

\begin{example}
\label{ex:perspective_extended_values}
Let $\cX$ be a nonempty closed convex set and $g$ be a finite convex function.
Define $f(\bx) := g(\bx)$ if $\bx \in \cX$ and $f(\bx) := \infty$ otherwise.
We have $\tilde f (\bx, \lambda) = \tilde g (\bx, \lambda)$ if $(\bx, \lambda) \in \tilde \cX$ and $\tilde f (\bx, \lambda) = \infty$ otherwise.
\end{example}

\begin{example}
\label{ex:perspective_functional_description}
For a set $\cX := \{\bx : f_i(\bx) \leq 0 \tforall i \in \cI\}$, where the functions $f_i$ are closed and convex, we have $\tilde \cX = \{(\bx, \lambda) : \tilde f_i (\bx, \lambda) \leq 0 \tforall i \in \cI \}$.
Equivalently, using Remark~\ref{rem:perspective_values}, we have $\tilde \cX = \cl \{(\bx, \lambda) : \lambda > 0, \lambda f_i (\bx / \lambda) \leq 0 \tforall i \in \cI \}$.
\end{example}

\subsection{Valid inequalities}

A second cone that is naturally associated with a convex set is the cone of its valid inequalities.
This will play an important role in the analysis of our MICP in Section~\ref{sec:relaxation}.
We report here a formal definition and a useful property.

\begin{definition}
\label{def:valid_inequalities}
We define the \emph{cone of valid inequalities} of a set  $\cX \subseteq \R^n$ as
$$
\cX^\circ := \{(\ba, b) : \ba^\top \bx + b \geq 0 \tforall \bx \in \cX\}.
$$
\end{definition}

The cone $\cX^\circ$ is easily seen to be closed and convex, even when $\cX$ is neither closed nor convex.
Note also that the cone of valid inequalities is closely related to the \emph{polar set}, but the latter lives in $n$ dimensions.

The next lemma relates the two operations defined in this section.
Recall that the \emph{dual cone} of a closed convex cone $\cK$ is the set $\cK^* := \{\ba : \ba^\top \bx \geq 0 \tforall \bx \in \cK\}$.

\begin{lemma}
\label{lemma:polar_description}
Let $\cX$ be a closed convex set.
The closed convex cones $\tilde \cX$ and $\cX^\circ$ are dual to each other.
\end{lemma}

\begin{proof}
The perspective cone $\tilde \cX$ can be equivalently defined as the closure of the cone generated by $\cX \times \{1\}$.
By applying~\cite[Corollary~11.7.2]{rockafellar1970convex} to the latter set, we obtain $\tilde \cX = \{ (\bx, \lambda) : \ba^\top \bx + b \lambda \geq 0 \tforall (\ba, b) \in \cX^\circ \}$.
This shows that $\tilde \cX = (\cX^\circ)^*$.
The other direction follows from the bipolar theorem~\cite[Theorem~14.1]{rockafellar1970convex}.
\end{proof}

\section{Mixed-integer convex formulation}
\label{sec:micp}
We now present the main contribution of this paper: the formulation of the SPP in GCS~\eqref{eq:spp_in_gcs} as a strong and lightweight MICP.
This program is designed in two steps.
First, in Section~\ref{sec:bilinear}, we extend the network-flow formulation of the classical SPP (recalled in Section~\ref{sec:network_flow}) to our setting.
This yields an optimization problem with bilinear equality constraints.
Second, in Section~\ref{sec:convexification_bilinearities}, we construct a convex relaxation tailored to these bilinear constraints and we formulate our MICP.
The relaxation technique used in this section will be described at a higher level of generality and thoroughly analyzed in Section~\ref{sec:relaxation}.

\subsection{Network-flow formulation of the SPP}
\label{sec:network_flow}
The starting point for the design of our MICP is the network-flow formulation of the SPP with scalar nonnegative edge lengths (see, e.g.,~\cite[Section~4.1]{ahuja1993network}):
\begin{subequations}
\label{eq:network_flow}
\begin{align}
\label{eq:network_flow_objective}
\minimize
\quad & \sum_{e \in \cE} l_e y_e \\
\subjectto \quad
\label{eq:network_st}
& \sum_{e \in \cE_s\out} y_e =1, \ \sum_{e \in \cE_t\inc} y_e = 1, \\
\label{eq:network_v}
& \sum_{e \in \cE_v\inc} y_e  = \sum_{e \in \cE_v\out} y_e, \ \sum_{e \in \cE_v\out} y_e \leq 1, && \forall v \in \cV - \{s,t\}, \\
\label{eq:network_flow_nonnegativity}
& y_e \geq 0, && \forall e \in \cE.
\end{align}
\end{subequations}
In this LP the decision  variables $y_e$ parameterize a path $p$, with $y_e = 1$ if the edge $e$ is traversed by $p$ and $y_e = 0$ otherwise.
The scalar $l_e \geq 0$ represents the length of the edge $e$.
The sets $\cE_v\inc := \{(u,v) \in \cE \}$ and $\cE_v\out := \{(v,u) \in \cE \}$ collect the edges incoming and outgoing vertex $v$.
Without loss of generality, we assume $|\cE_s\inc| = |\cE_t\out| = 0$, i.e., the source and the target have no incoming and outgoing edges, respectively.
Interpreting the value of $y_e$ as the \emph{flow} carried by the edge $e$, constraint~\eqref{eq:network_st} asks that one unit of flow is injected in the source and ejected from the target.
For all the other vertices, constraint~\eqref{eq:network_v} enforces the \emph{flow conservation} and a \emph{degree constraint}.
The latter enforces a limit of one to the total flow traversing the vertex.

\begin{remark}
\label{rem:integrality}
Note that we do not explicitly require the flows $y_e$ to be binary, but we only enforce their nonnegativity in~\eqref{eq:network_flow_nonnegativity}.
This is because all the basic feasible solutions of the LP~\eqref{eq:network_flow} can be shown to have binary value (see, e.g.,~\cite[Section~11.12]{ahuja1993network}), and the constraints $y_e \in \{0,1\}$ would not affect the optimal value of this program.
\end{remark}

\begin{remark}
\label{rem:degree}
Since we assumed the edge lengths $l_e$ to be nonnegative, the degree constraint in~\eqref{eq:network_v} is actually redundant for the LP~\eqref{eq:network_flow}, as well as for problem~\eqref{eq:bilinear_spp} below.
However, as we will see in Section~\ref{sec:degree_constraints}, this constraint is not redundant for our final MICP.
Therefore we include it in our formulation from the start.
\end{remark}

\subsection{Biconvex formulation}
\label{sec:bilinear}
As an intermediate step towards our MICP, we formulate the SPP in GCS as a \emph{biconvex} optimization problem.
Specifically, a nonlinear program whose nonconvexity comes only from products between the vertex locations and the flow variables parameterizing a path.
Note that this is consistent with the observation from Section~\ref{sec:complexity} that the SPP in GCS simplifies to a convex program if we fix either the vertex locations or the path through the graph.

A natural attempt to extend the LP~\eqref{eq:network_flow} to the SPP in GCS is to proceed as done for other graph problems with neighborhoods~\cite{gentilini2013travelling,blanco2017minimum,burdick2021multi}: include the vertex locations $\bx_v$ among our decision variables, enforce the constraint $\bx_v \in \cX_v$ for all $v \in \cV$, and substitute the addends in the cost~\eqref{eq:network_flow_objective} with $\ell_e(\bx_u, \bx_v) y_e$.
However, one immediate issue with this approach is that the latter product is undefined if $\ell_e(\bx_u, \bx_v) = \infty$ and $y_e = 0$, while we would like the cost contribution of the edge $e$ to always be zero if $y_e = 0$.
Perspective functions give us a convenient and rigorous way to ``turn on and off'' the length of an edge using the corresponding flow variable.

Let us introduce two auxiliary variables $\bz_e := y_e \bx_u$ and $\bz_e' := y_e \bx_v$ per edge $e = (u,v)$, and consider the perspective function $\tilde \ell_e(\bz_e, \bz_e', y_e)$.\footnote{
We are slightly abusing notation here: since in Definition~\ref{def:perspective_function} we defined the perspective of functions with a single argument, to be precise, we should write $\tilde \ell_e((\bz_e, \bz_e'), y_e)$.
}
When the flow $y_e$ is positive, this function coincides with the product above:
$$
\tilde \ell_e(\bz_e, \bz_e', y_e)
= \ell_e (\bz_e/y_e, \bz_e'/y_e) y_e
= \ell_e (y_e \bx_u/y_e, y_e \bx_v/y_e) y_e
= \ell_e(\bx_u, \bx_v) y_e,
$$
where the first equality comes from Remark~\ref{rem:perspective_values}.
When the flow $y_e$ is zero, the function $\tilde \ell_e$ is well defined and correctly evaluates to zero, even when $\ell_e(\bx_u, \bx_v) = \infty$.
In fact, $y_e = 0$ implies $\bz_e = \bz_e' = \bzero$, and $\tilde \ell_e(\bzero,\bzero,0) = 0$ as discussed in Remark~\ref{rem:perspective_values}.

Overall, we then have the following biconvex formulation of the SPP in GCS:
\begin{subequations}
\label{eq:bilinear_spp}
\begin{tcolorbox}[ams align]
\label{eq:bilinear_objective}
\minimize
\quad & \sum_{e \in \cE} \tilde \ell_e(\bz_e, \bz_e', y_e) \\
\subjectto \quad
\label{eq:bilinear_flow}
& \text{constraints of problem~\eqref{eq:network_flow}}, \\
\label{eq:bilinear_Xv}
& \bx_v \in \cX_v, && \forall v \in \cV, \\
\label{eq:bilinear_yz}
& \bz_e = y_e \bx_u, \ \bz_e'= y_e \bx_v, && \forall e = (u,v) \in \cE.
\end{tcolorbox}\noindent
\end{subequations}
The decision variables are the flows $y_e$, the vertex positions $\bx_v$, and the auxiliary variables $\bz_e$ and $\bz_e'$.
The role of the latter is to match the vertices $\bx_u$ and $\bx_v$ when $y_e = 1$, and collapse to zero when $y_e = 0$.
This behavior is driven by the bilinear equality constraints~\eqref{eq:bilinear_yz}, which are the only nonconvexity in our formulation and whose convexification is the focus of the next subsection.
Before that, let us formally verify that, as mentioned in Remark~\ref{rem:integrality} for the LP~\eqref{eq:network_flow}, forcing the flows $y_e$ to be binary does not affect the optimal value of the biconvex program~\eqref{eq:bilinear_spp}.

\begin{proposition}
\label{prop:integrality}
For any local minimum $L \in \R_{\geq 0}$ of problem~\eqref{eq:bilinear_spp}, there exists a feasible point of~\eqref{eq:bilinear_spp} with cost equal to $L$ and such that $y_e \in \{0,1\}$ for all $e \in \cE$.
\end{proposition}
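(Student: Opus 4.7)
The plan is to exploit the fact that, once the vertex positions are held fixed, problem~\eqref{eq:bilinear_spp} collapses to a classical shortest-path LP over the flow variables, whose feasible polytope is integral. Let $(\bar y, \bar \bx, \bar \bz, \bar \bz')$ be a local minimizer of \eqref{eq:bilinear_spp} with finite cost $L \in \R$. Using the bilinear equations~\eqref{eq:bilinear_yz} together with the perspective identities derived immediately after~\eqref{eq:bilinear_spp}, the objective evaluates to $L = \sum_{e \in \cE} \bar y_e c_e$, where $c_e := \ell_e(\bar \bx_u, \bar \bx_v) \in \R_{\geq 0} \cup \{\infty\}$; since $L$ is finite, every edge carrying positive flow $\bar y_e > 0$ must satisfy $c_e < \infty$.

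Next I would show that $\bar y$ is in fact optimal for the LP that minimizes $\sum_e c_e y_e$ subject to the constraints of~\eqref{eq:network_flow}. Suppose, for contradiction, that a feasible flow $y'$ achieves $\sum_e c_e y'_e = L' < L$; necessarily $y'_e = 0$ whenever $c_e = \infty$. Consider the one-parameter family $y(t) := (1-t) \bar y + t y'$, $\bx(t) := \bar \bx$, $\bz_e(t) := y_e(t) \bar \bx_u$, $\bz'_e(t) := y_e(t) \bar \bx_v$ for small $t > 0$. The family stays feasible for~\eqref{eq:bilinear_spp}, every coordinate varies linearly with $t$, and its objective evaluates to $(1-t) L + t L' < L$. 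For $t$ small enough the perturbed point lies in any prescribed neighborhood of $(\bar y, \bar \bx, \bar \bz, \bar \bz')$, contradicting local minimality.

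Finally, the constraint set of the LP above is the unit $s$-$t$ flow polytope, which is integral by total unimodularity of the node-edge incidence matrix, so the LP admits an optimum $\hat y$ with $\hat y_e \in \{0,1\}$ corresponding to a single $s$-$t$ path. Keeping $\bar \bx$ and defining $\hat \bz_e := \hat y_e \bar \bx_u$, $\hat \bz'_e := \hat y_e \bar \bx_v$ as in~\eqref{eq:bilinear_yz}, one obtains a feasible point of~\eqref{eq:bilinear_spp} with binary flows and cost exactly $\sum_e \hat y_e c_e = L$, as required.

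The main technical delicacy is the perturbation step: perspective functions are only lower-semicontinuous at $\lambda = 0$ in general, so it is necessary to check that the cost of the perturbed point really equals $\sum_e y_e(t) c_e$. The construction $\bz_e(t) = y_e(t) \bar \bx_u$ keeps the triple $(\bz_e(t), \bz'_e(t), y_e(t))$ on the radial ray through $(\bar \bx_u, \bar \bx_v, 1)$, along which $\tilde \ell_e$ coincides with $y_e(t) c_e$ whenever $c_e < \infty$ and vanishes when $y_e(t) = 0$, exactly as shown in the derivation following~\eqref{eq:bilinear_spp}. Continuity along this specific trajectory is therefore automatic, and no additional convex-analytic machinery is needed.
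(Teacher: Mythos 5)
Your argument is correct and follows essentially the same route as the paper's: fix the vertex positions so that the problem collapses to the network-flow LP with costs $\ell_e(\bar\bx_u,\bar\bx_v)$, use local minimality (via a line segment toward any cheaper feasible flow) to conclude that $\bar\by$ is LP-optimal, and invoke integrality of the flow polyhedron (Remark~\ref{rem:integrality}) to extract a binary optimal flow of the same cost. The only cosmetic difference is that the paper first restricts to the support $\{e : y_e^* > 0\}$ before passing to the LP, whereas you work over the full flow polyhedron and handle the infinite-cost edges explicitly; both are fine.
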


\begin{proof}
Given a local minimizer of~\eqref{eq:bilinear_spp} with cost $L$, we fix the vertex positions $\bx_v$.
This reduces problem~\eqref{eq:bilinear_spp} to an LP of the form~\eqref{eq:network_flow}.
The optimal value of this LP must be $L$, otherwise we would have found a descent direction and our solution of~\eqref{eq:bilinear_spp} would not be locally optimal.
Furthermore, because of Remark~\ref{rem:integrality}, we can assume that the optimal flows of this LP are binary.
Paired with the previously fixed variables $\bx_v$, these binary flows yield a feasible solution of~\eqref{eq:bilinear_spp} with cost $L$.
\end{proof}

\subsection{Convex relaxation of the bilinear constraints}
\label{sec:convexification_bilinearities}
The biconvex program~\eqref{eq:bilinear_spp} is our first formulation of the SPP in GCS that can be tackled numerically.
However, the bilinear constraints~\eqref{eq:bilinear_yz} make this optimization problem challenging to solve, even just locally.
In this subsection we show how to reformulate problem~\eqref{eq:bilinear_spp} as a lightweight and strong MICP, that can be reliably solved to global optimality using branch-and-bound algorithms.

The next lemma allows us to construct a tight envelope around the constraints of the biconvex program~\eqref{eq:bilinear_spp} through a small number of perspective cones.
In its statement we let $\cE_v := \cE_v\inc \cup \cE_v\out$ denote the set of edges incident with vertex $v \in \cV$.
Recall also that a valid constraint for an optimization problem is a constraint that is verified by all the feasible points.
\begin{lemma}
\label{lemma:valid_constraint}
For some vertex $v \in \cV$, assume that the linear inequality
\begin{align}
\label{eq:valid_inequality}
\sum_{e \in \cE_v} c_e y_e + d \geq 0
\end{align}
is valid for problem~\eqref{eq:bilinear_spp}.
Partitioning the summation over $\cE_v$ in incoming and outgoing edges, we have that the following convex constraint is also valid for~\eqref{eq:bilinear_spp}:
\begin{align}
\label{eq:implied_constraint}
\left(\sum_{e \in \cE_v\inc} c_e \bz_e' + \sum_{e \in \cE_v\out} c_e \bz_e + d \bx_v, \ \sum_{e \in \cE_v} c_e y_e + d \right)
\in \tilde \cX_v.
\end{align}
\end{lemma}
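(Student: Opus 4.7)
The plan is to substitute the bilinear equalities~\eqref{eq:bilinear_yz} directly into the left-hand side of~\eqref{eq:implied_constraint} and check that the resulting point lies in $\tilde \cX_v$ by appealing to Definition~\ref{def:perspective_cone}. The key observation is an orientation bookkeeping: for $e = (u,v) \in \cE_v\inc$ the head of $e$ is $v$, so $\bz_e' = y_e \bx_v$; for $e = (v, w) \in \cE_v\out$ the tail of $e$ is $v$, so $\bz_e = y_e \bx_v$. Both families therefore carry the same point $\bx_v$, so all the edge contributions collapse onto a single linear combination multiplying $\bx_v$.

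Let me fix a feasible point of~\eqref{eq:bilinear_spp} and set
$$\lambda := \sum_{e \in \cE_v} c_e y_e + d.$$
By the assumed validity of~\eqref{eq:valid_inequality}, this scalar satisfies $\lambda \geq 0$. Using the bilinear identities from the previous paragraph,
$$\sum_{e \in \cE_v\inc} c_e \bz_e' + \sum_{e \in \cE_v\out} c_e \bz_e + d \bx_v = \Bigl(\sum_{e \in \cE_v\inc} c_e y_e + \sum_{e \in \cE_v\out} c_e y_e + d\Bigr) \bx_v = \lambda \bx_v.$$
Hence the expression inside~\eqref{eq:implied_constraint} is exactly $(\lambda \bx_v, \lambda)$.

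It then remains to show that $(\lambda \bx_v, \lambda) \in \tilde \cX_v$. Since $\bx_v \in \cX_v$ by~\eqref{eq:bilinear_Xv} and $\lambda \geq 0$, the point $(\lambda \bx_v, \lambda)$ belongs to the set $\{(\bx, \mu) : \mu \geq 0, \ \bx \in \mu \cX_v \}$ whenever $\lambda > 0$, and its closure contains it when $\lambda = 0$ (using the alternative characterization of $\tilde \cX_v$ from Definition~\ref{def:perspective_cone}, or equivalently Remark~\ref{rem:bounded_set}, since each $\cX_v$ is compact and thus $(\cX_v)_\infty = \{\bzero\}$). Either way, $(\lambda \bx_v, \lambda) \in \tilde \cX_v$, establishing~\eqref{eq:implied_constraint}.

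The proof is essentially a one-line substitution once the orientation bookkeeping is set up, so there is no real obstacle; the only point that deserves care is the boundary case $\lambda = 0$, which forces $\lambda \bx_v = \bzero$ and is handled cleanly by the closure characterization of the perspective cone.
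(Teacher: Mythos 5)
Your proof is correct and follows essentially the same route as the paper's: both arguments use the boundedness of $\cX_v$ (Remark~\ref{rem:bounded_set}) to reduce membership in $\tilde\cX_v$ to the two conditions $\lambda \geq 0$ (the assumed valid inequality) and $\lambda \bx_v \in \lambda \cX_v$, with the bilinear constraints~\eqref{eq:bilinear_yz} collapsing the vector expression to $\lambda \bx_v$. Your write-up merely makes the orientation bookkeeping and the $\lambda = 0$ boundary case explicit, which the paper leaves implicit.
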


\begin{proof}
Constraint~\eqref{eq:implied_constraint} requires two conditions to hold.
One is~\eqref{eq:valid_inequality}, which is assumed.
The other is verified by multiplying both sides of $\bx_v \in \cX_v$ from~\eqref{eq:bilinear_Xv} by the left-hand side of~\eqref{eq:valid_inequality}, and then using the bilinear constraints~\eqref{eq:bilinear_yz}.
\end{proof}

\begin{remark}
\label{rem:valid_equality}
If the valid constraint~\eqref{eq:valid_inequality} holds with equality, Lemma~\ref{lemma:valid_constraint} simply amounts to multiplying this equality by $\bx_v$, and it gives us a valid linear equality of the form $\sum_{e \in \cE_v\inc} c_e \bz_e' + \sum_{e \in \cE_v\out} c_e \bz_e + d \bx_v = \bzero$.
\end{remark}

\begin{remark}
\label{rem:rlt}
Generating new valid constraints by multiplying existing ones is a standard procedure at the core of many relaxation techniques~\cite{mccormick1976computability,sherali1990hierarchy,lovasz1991cones,lasserre2001global,parrilo2003semidefinite}.
Lemma~\ref{lemma:valid_constraint} will be analyzed at a higher level of generality in Section~\ref{sec:relaxation}, where its similarities with existing methods will be clearly drawn.
\end{remark}

Lemma~\ref{lemma:valid_constraint} lifts any valid linear constraint on the flows incident with vertex~$v$ into a convex constraint that envelops the feasible set of problem~\eqref{eq:bilinear_spp}.
Our MICP is obtained by applying this lemma to each flow constraint in the LP~\eqref{eq:network_flow}, and by replacing the constraints of the biconvex program~\eqref{eq:bilinear_spp} with the envelope resulting from this process.
Let us first state our MICP and then prove its equivalence to the SPP in GCS (Theorem~\ref{th:validity_micp} below):
\begin{subequations}
\label{eq:micp}
\begin{tcolorbox}[ams align]
\label{eq:micp_objective}
\minimize
\quad & \sum_{e \in \cE} \tilde \ell_e(\bz_e, \bz_e', y_e) \\
\subjectto \quad
\label{eq:micp_st}
& \sum_{e \in \cE_s\out} y_e =1, \ \sum_{e \in \cE_t\inc} y_e = 1, \\
\label{eq:micp_degree}
& \sum_{e \in \cE_v\out} y_e \leq 1, && \forall v \in \cV - \{s,t\}, \\
\label{eq:micp_conservation}
& \sum_{e \in \cE_v\inc} (\bz_e', y_e) = \sum_{e \in \cE_v\out} (\bz_e, y_e), && \forall v \in \cV - \{s,t\}, \\
\label{eq:micp_nonnegativity}
& (\bz_e, y_e) \in \tilde \cX_u, \  (\bz_e', y_e)  \in \tilde \cX_v, && \forall e = (u,v) \in \cE,\\
\label{eq:micp_integrality}
& y_e \in \{0,1\}, && \forall e \in \cE.
\end{tcolorbox}
\end{subequations}
\noindent
Constraint~\eqref{eq:micp_conservation} is obtained as in Remark~\ref{rem:valid_equality} from the flow conservation in~\eqref{eq:network_v}, and~\eqref{eq:micp_nonnegativity} is the result of applying Lemma~\ref{lemma:valid_constraint} to the nonnegativity constraint~\eqref{eq:network_flow_nonnegativity}.
Note that the application of the same technique to the equalities~\eqref{eq:micp_st} and to the degree constraint in~\eqref{eq:network_v} would give us
\begin{subequations}
\label{eq:reconstruct}
\begin{align}
\label{eq:reconstruct_st}
& \bx_s = \sum_{e \in \cE_s\out} \bz_e, \
\bx_t = \sum_{e \in \cE_t\inc} \bz_e', \\
\label{eq:reconstruct_v}
& \bx_v - \sum_{e \in \cE_v\out} \bz_e \in \left(1 -\sum_{e \in \cE_v\out} y_e \right) \cX_v, && \forall v \in \cV - \{s,t\}.
\end{align}
\end{subequations}
However these constraints would be redundant since the vertex positions $\bx_v$ for $v \in \cV$ do not appear in the rest of problem~\eqref{eq:micp}.
Note also that the combination of~\eqref{eq:reconstruct} and~\eqref{eq:micp_nonnegativity} implies the constraints $\bx_v \in \cX_v$ for all $v \in \cV$, which would also be redundant for our MICP.
The convex relaxation of~\eqref{eq:micp} is obtained simply by dropping the integrality constraint~\eqref{eq:micp_integrality} (the nonnegativity of the flows $y_e$ is imposed by the cost and also by~\eqref{eq:micp_nonnegativity}).
Observe that, unlike the biconvex program~\eqref{eq:bilinear_spp}, the optimal value of the MICP can decrease if the flows are allowed to be fractional.

\begin{theorem}
\label{th:validity_micp}
The MICP~\eqref{eq:micp} has optimal value equal to the SPP in GCS~\eqref{eq:spp_in_gcs}.
An optimal path $p$ for problem~\eqref{eq:spp_in_gcs} is recovered from the solution of~\eqref{eq:micp} through the relation $\cE_p := \{e \in \cE: y_e = 1\}$.
An optimal positioning of the vertices is reconstructed for the source and the target as in~\eqref{eq:reconstruct_st}, and for all the other vertices by letting $\bx_v$ be any point such that~\eqref{eq:reconstruct_v} holds.
\end{theorem}

In Section~\ref{sec:conic} we will see  that this theorem follows from a simple geometric property of Lemma~\ref{lemma:valid_constraint}.
Here we give a direct proof that explicitly illustrates the logic behind our formulation.

\begin{proof}
The only flows that can satisfy the constraints in~\eqref{eq:micp} are such that the set $\cE_p$ defined above describes the vertex-disjoint union of an $s$-$t$ path and cycles.
However, the presence of cycles can be excluded since the edge lengths $\ell_e$ are nonnegative, and traversing a cycle that is disjoint from the main path cannot decrease the cost.
Therefore, at optimality, $\cE_p$ identifies a path $p$.
For all the edges $e \notin \cE_p$,  constraint~\eqref{eq:micp_nonnegativity} simplifies to $\bz_e = \bz_e' = \bzero$, and the corresponding cost addends give $\tilde \ell_e(\bzero, \bzero, 0) = 0$.
For the vertices $v \notin p$, constraint~\eqref{eq:micp_conservation} is trivially satisfied and~\eqref{eq:reconstruct_v} reads $\bx_v \in \cX_v$.
For an edge $e =(u,v)$ along the path $p$, constraint~\eqref{eq:micp_nonnegativity} becomes $\bz_e \in \cX_u$ and $\bz_e' \in \cX_v$, and the cost addend is $\tilde \ell_e(\bz_e, \bz_e', 1) = \ell_e(\bz_e, \bz_e')$.
Denoting with $f=(v,w) \in \cE_p$ the edge after $e$ in the path, the flow conservation~\eqref{eq:micp_conservation} reads $\bz_e' = \bz_f$.
Finally, the conditions in~\eqref{eq:reconstruct} give us $\bx_u = \bz_e$ and $\bx_v = \bz_e'$ for all edges $e =(u,v) \in \cE_p$.
\end{proof}

\begin{remark}
\label{rem:singletons}
If the sets $\cX_v$ are singletons, the SPP in GCS simplifies to the SPP with nonnegative edge lengths.
In this case our MICP reduces to the LP~\eqref{eq:network_flow} and its convex relaxation is exact, as discussed in Remark~\ref{rem:integrality}.
\end{remark}

\begin{remark}
\label{rem:set_based}
For the edge lengths $\ell_e$ and convex sets $\cX_v$ that typically appear in practice, the MICP~\eqref{eq:micp} can be solved to global optimality with standard solvers (see the discussion in Section~\ref{sec:perspective}).
However, problem~\eqref{eq:micp} can be tackled numerically even when the sets in our GCS are not defined by explicit constraints (e.g., convex inequalities).
For example, each set $\cX_v$ may be very complex and accessible only through an oracle that, given a point $\bx_v$, either certifies that $\bx_v \in \cX_v$ or returns a separating hyperplane.
In fact, such an oracle is easily adapted to checking membership to the perspective cones in~\eqref{eq:micp_nonnegativity}, and this black-box access to the problem constraints is sufficient for efficient optimization algorithms like the ellipsoid method~\cite{grotschel2012geometric}.
\end{remark}

\subsection{Degree constraints}
\label{sec:degree_constraints}
In Remark~\ref{rem:degree} we anticipated that, although redundant for the LP~\eqref{eq:network_flow}, the degree constraints~\eqref{eq:micp_degree} play an important role in our MICP.
This is illustrated in the next example, which shows how the optimal flows from~\eqref{eq:micp} can induce cycles if the degree constraints are not enforced.

\begin{example}
Consider a graph with vertices $\cV:= \{s,1,2,t\}$ and edges $\cE:= \{(s,1),(1,2),(2,1),(1,t)\}$.
Define the sets $\cX_s := \{-1\}$, $\cX_1 := [-1,1]$, $\cX_2 := \{0\}$, and $\cX_t := \{1\}$.
Let the length of each edge be the Euclidean distance squared~\eqref{eq:2norm_squared}.
The optimal value of this SPP in GCS is $2$ and the optimal path is $p=(s, 1, t)$.
However, if we do not enforce the degree constraints~\eqref{eq:micp_degree}, our MICP has optimal value equal to $1$ and its optimal flows are $y_e =1$ for all $e \in \cE$, i.e., they induce the cycle $(1,2,1)$.
\end{example}

In case of an acyclic graph $G$, the issues just described do not arise and the degree constraints are redundant for our MICP and its convex relaxation.
Nonetheless, we still include them in our formulation since they are computationally light and their explicit presence can trigger the use of specialized generalized-upper-bound branching rules in the solver~\cite[Section~9.2]{conforti2014integer}.


\section{Alternative formulations}
\label{sec:alternative_formulations}
Multiple alternative MICP formulations of the SPP in GCS can be designed and finding the most effective one is a tradeoff between the size of the program and the tightness of its convex relaxation.
The MICP~\eqref{eq:micp} is very compact: it has only $O(|\cE|)$ binary variables, $O(n|\cE|)$ continuous variables, and $O(n(|\cV| + |\cE|))$ constraints (assuming that the cones in~\eqref{eq:micp_nonnegativity} are described by $O(n)$ constraints).
In addition, in the experiments in Section~\ref{sec:examples}, we will see that the relaxation of our MICP is typically very tight (although a carefully designed instance in Section~\ref{sec:example_failures} shows that our relaxation can, in principle, be arbitrarily loose).

A simple alternative way to reformulate the biconvex problem~\eqref{eq:bilinear_spp} as an MICP is to enforce the integrality constraints $y_e \in \{0,1\}$ for all $e \in \cE$, and relax each bilinear constraint~\eqref{eq:bilinear_yz} independently using a McCormick envelope~\cite{mccormick1976computability}.
With our notation, this amounts to replacing~\eqref{eq:bilinear_yz} with
\begin{align}
\label{eq:mc_relaxation}
(\bz_e, y_e) \in \tilde \cB_u, \
(\bz_e', y_e) \in \tilde \cB_v, \
(\bx_u - \bz_e, 1- y_e) \in \tilde \cB_u, \
(\bx_v - \bz_e', 1- y_e) \in \tilde \cB_v,
\end{align}
where, for each $v \in \cV$, we let $\cB_v$ be an axis-aligned box that contains $\cX_v$.
Especially if the convex sets $\cX_v$ are defined by many constraints, this MICP is more compact than ours.
However, as we will see in Section~\ref{sec:examples}, this formulation has loose convex relaxation and its solution times are generally much larger than with our approach.

At the other end of the spectrum, a variety of stronger but potentially more expensive formulations could be devised.
For example, we have found that subtour-elimination constraints like~\cite[Section~2.2]{taccari2016integer} can tighten the relaxation of our MICP for some classes of problems.
Alternatively, we could formulate our MICP by grouping the constraints in~\eqref{eq:bilinear_spp} vertex by vertex, and by computing the convex hull of each group (see Section~\ref{sec:tightness_relaxation} below).
We could also use more expensive semidefinite relaxations~\cite{lovasz1991cones,lasserre2001global,parrilo2003semidefinite} of the bilinear constraints~\eqref{eq:bilinear_yz}.
In our computational experience, the MICP~\eqref{eq:micp} represents the best compromise between a lightweight and a strong formulation, and its solution times are lower than any other formulation we have tested.

\section{Analysis of the mixed-integer formulation}
\label{sec:relaxation}
In this section we describe and analyze at a more abstract level the method used in Section~\ref{sec:convexification_bilinearities} to formulate the SPP in GSC as an MICP.
We show that Lemma~\ref{lemma:valid_constraint} can be used to design convex relaxations of a large class of bilinear constraints, and we connect this result to existing relaxation techniques for nonconvex optimization.
Finally, we give a simpler geometric proof of the validity of our MICP (already shown in Theorem~\ref{th:validity_micp}).

\subsection{Set-based relaxation of bilinear constraints}
\label{sec:set_based_relaxation}

Our first step in this analysis is to show that Lemma~\ref{lemma:valid_constraint} is, in fact, a general-purpose relaxation technique for nonconvex sets of the form
\begin{align}
\label{eq:bilinear_set}
\cS := \{(\bx, \by, \bZ) : \bx \in \cX, \ \by \in \cY, \ \bZ = \bx \by^\top\},
\end{align}
where $\cX \subseteq \R^n$ and $\cY \subseteq \R^m$ are closed convex sets.
In particular, here $\cX$ takes the place of a generic set $\cX_v$ in our GCS, while $\cY$ plays the role of the linear constraints on the flow variables incident with vertex $v$ (see Remark~\ref{rem:coincise_bilinear} below for more details).

A natural approach to construct a convex envelope around the set $\cS$ is to multiply all the valid inequalities $\ba^\top \bx + b \geq 0$ for the set $\cX$ by all the valid inequalities $\bc^\top \by + d \geq 0$ for the set $\cY$, and then use the bilinear equality $\bZ = \bx \by^\top$ to linearize these products.
This gives us an infinite family of valid linear inequalities for $\cS$, which form our convex relaxation:
\begin{multline}
\label{eq:relaxation}
\cS \subseteq \cS' := \{(\bx, \by, \bZ) : \ba^\top \bZ \bc + d \ba^\top \bx + b \bc^\top \by + b d \geq 0 \\
\tforall (\ba, b) \in \cX^\circ \tand (\bc, d) \in \cY^\circ\}.
\end{multline}
Note that the conditions $\bx \in \cX$ and $\by \in \cY$ are implied by the inequalities in~\eqref{eq:relaxation} that correspond to $(\bzero, 1) \in \cY^\circ$ and $(\bzero, 1) \in \cX^\circ$, respectively.

The relaxation~\eqref{eq:relaxation} is not obviously implementable on a computer, since it involves an infinite number of constraints.
However, if one of the two sets is a polytope (i.e., a bounded polyhedron) then the convex set $\cS'$ can be efficiently described by a finite number of perspective-cone constraints.

\begin{proposition}
\label{prop:relaxation_polytopic}
Let $\cY$ be a polytope with halfspace representation $\{ \by: \bc_i^\top \by + d_i \geq 0 \tforall i \in \cI \}$.
We have
\begin{align}
\label{eq:relaxation_polytopic}
\cS' =
\{ (\bx, \by, \bZ) :
(\bZ \bc_i  + d_i\bx, \ \bc_i^\top \by + d_i)
\in \tilde \cX \tforall i \in \cI
\}.
\end{align}
\end{proposition}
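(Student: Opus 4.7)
The strategy is a double inclusion, where both sides are first rewritten in terms of linear valid inequalities for $\cX$ using Lemma~\ref{lemma:polar_description}. Applying that lemma to each perspective-cone constraint in the right-hand side of~\eqref{eq:relaxation_polytopic} turns it into
\begin{align*}
\mathrm{RHS} = \{(\bx,\by,\bZ) : \ba^\top \bZ \bc_j + d_j \ba^\top \bx + b \bc_j^\top \by + b d_j \geq 0 \tforall (\ba,b)\in \cX^\circ \tand j \in \cJ\}.
\end{align*}
Then $\cS'$ and the right-hand side are each described by a family of linear inequalities indexed by $(\ba,b)\in\cX^\circ$ and some inequality $(\bc,d)$, the only difference being the range of $(\bc,d)$: all of $\cY^\circ$ for $\cS'$, versus just the defining inequalities $\{(\bc_j,d_j)\}_{j\in\cJ}$ for the RHS. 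Matching the two families of inequalities will give the claim.

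\textbf{Forward inclusion} ($\cS'\subseteq \mathrm{RHS}$). This is immediate: for each $j\in\cJ$ the pair $(\bc_j,d_j)$ lies in $\cY^\circ$ by construction, so the inequality defining $\cS'$ for $(\bc,d)=(\bc_j,d_j)$ is exactly the one defining the RHS at index $j$.

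\textbf{Reverse inclusion} (RHS $\subseteq \cS'$). Fix an arbitrary $(\ba,b)\in\cX^\circ$ and $(\bc,d)\in\cY^\circ$; I want to produce the corresponding inequality from the finitely many inequalities available on the RHS. The key step is polyhedral duality: by Farkas' lemma applied to $\cY$, any valid inequality $\bc^\top\by+d\geq 0$ on $\cY$ can be written as $\bc = \sum_{j\in\cJ} \lambda_j \bc_j$ and $d \geq \sum_{j\in\cJ} \lambda_j d_j$ for some $\lambda_j\geq 0$. The inclusion of the trivial inequality $(\bzero,1)$ among the $(\bc_j,d_j)$ is exactly what lets me absorb the slack $d-\sum_j \lambda_j d_j\geq 0$ into the coefficient of that trivial row, yielding an \emph{exact} nonnegative decomposition $(\bc,d)=\sum_{j} \lambda_j (\bc_j,d_j)$ with $\lambda_j\geq 0$. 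Substituting this decomposition into
\begin{align*}
\ba^\top \bZ \bc + d \ba^\top \bx + b \bc^\top \by + b d = \sum_{j\in\cJ} \lambda_j \big(\ba^\top \bZ \bc_j + d_j \ba^\top \bx + b \bc_j^\top \by + b d_j\big),
\end{align*}
each summand is nonnegative by the assumption that $(\bx,\by,\bZ)$ lies in the RHS, and $\lambda_j\geq 0$, so the whole expression is nonnegative. This holds for every $(\ba,b)\in\cX^\circ$ and $(\bc,d)\in\cY^\circ$, hence $(\bx,\by,\bZ)\in\cS'$.

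\textbf{Main obstacle.} The only nontrivial ingredient is the Farkas-type decomposition of an arbitrary $(\bc,d)\in\cY^\circ$ into a nonnegative combination of the defining rows $(\bc_j,d_j)$; the role of the trivial inequality hypothesis is precisely to handle the slack variable in this decomposition (assuming $\cY$ is nonempty, which is the only case of interest). Everything else is algebraic bookkeeping and an application of Lemma~\ref{lemma:polar_description}.
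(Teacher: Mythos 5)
Your proof is correct and follows essentially the same route as the paper's: rewrite the perspective-cone memberships as linear inequalities via Lemma~\ref{lemma:polar_description}, then use the Farkas-type decomposition $(\bc,d)=\sum_j \lambda_j(\bc_j,d_j)$ with $\lambda_j\geq 0$ (enabled by the trivial inequality $(\bzero,1)$) to show the finitely many defining inequalities generate all of $\cY^\circ$. You merely make explicit the two-inclusion structure and the linearity of the inequality in $(\bc,d)$ that the paper leaves implicit.
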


\begin{proof}
To recover~\eqref{eq:relaxation} from~\eqref{eq:relaxation_polytopic} we first use Lemma~\ref{lemma:polar_description} to rewrite the membership to $\tilde \cX$ as $\ba^\top (\bZ \bc_i  + d_i \bx) + b (\bc_i^\top \by + d_i) \geq 0$ for all $(\ba, b) \in \cX^\circ$.
Then we notice that listing only the valid inequalities $(\bc_i, d_i)$ for $i \in \cI$ is equivalent to listing all the valid inequalities $(\bc, d) \in \cY^\circ$.
In fact, since $\cY$ is bounded, any vector $(\bc, d) \in \cY^\circ$ can be expressed as $\sum_{i \in \cI} \alpha_i (\bc_i, d_i)$ for some nonnegative coefficients $\alpha_i$.
Using these coefficients, the inequality $\ba^\top \bZ \bc + d \ba^\top \bx + b \bc^\top \by + b d \geq 0$ is seen to be implied by the inequalities generated by $(\bc_i, d_i)$ for $i \in \cI$.
\end{proof}

We then have two descriptions of the relaxation $\cS'$: the symmetric one~\eqref{eq:relaxation} that clearly exposes the logic behind the technique, and the asymmetric one~\eqref{eq:relaxation_polytopic} that is computationally efficient, provided that one of the two sets is polytopic and has a small number of facets.
The asymmetric relaxation generalizes Lemma~\ref{lemma:valid_constraint}, with $\bc_i^\top \by + d_i \geq 0$ taking the place of the flow inequality~\eqref{eq:valid_inequality}.
The asymmetric relaxation is also set based, in the sense that it does not rely on the explicit constraints defining $\cX$, but it works directly with its abstract set representation.
Besides making the analysis very concise, this has also the practical advantages discussed in Remark~\ref{rem:set_based}.

\begin{remark}
\label{rem:coincise_bilinear}
The constraints of the biconvex program~\eqref{eq:bilinear_spp} can be restated in terms of the set $\cS$ as follows.
First, we collect in the vector $\by_v := (y_e)_{e \in \cE_v}$ the flows incident with vertex $v$.
Second, we let $\cY_v$ be the polytope defined by the linear constraints acting on $\by_v$.
Constraint~\eqref{eq:network_st} and the flow nonnegativity~\eqref{eq:network_flow_nonnegativity} make $\cY_s$ and $\cY_t$ unit simplices (recall that $|\cE_s\inc| = |\cE_t\out| = 0$).
For $v \neq s,t$, the polytope $\cY_v$ is defined by the flow nonnegativity~\eqref{eq:network_flow_nonnegativity} together with the conservation and degree constraints in~\eqref{eq:network_v}.
Third, we stack in the columns of the matrix $\bZ_v$ the auxiliary variables $\bz_e'$ for $e \in \cE_v\inc$ and $\bz_e$ for $e \in \cE_v\out$, so that the bilinear constraints~\eqref{eq:bilinear_yz} take the form $\bZ_v = \bx_v \by_v^\top$.
By defining the sets $\cS_v$ as in~\eqref{eq:bilinear_set}, the constraints of problem~\eqref{eq:bilinear_spp} become $(\bx_v, \by_v, \bZ_v) \in \cS_v$ for all $v \in \cV$.
Our relaxation of the SPP in GCS is then obtained by replacing the constraint sets $\cS_v$ with $\cS_v'$ defined as in~\eqref{eq:relaxation_polytopic}.
\end{remark}

\subsection{Tightness of the relaxation $\cS'$}
\label{sec:tightness_relaxation}

Ideally, we would like our relaxation to be as tight as possible, and the set $\cS'$ to coincide with the convex hull of $\cS$.
This equality holds, for example, when $\cX$ and $\cY$ are intervals on the real line, in which case $\cS'$ simplifies to the McCormick envelope~\cite{mccormick1976computability}.
However, the inclusion $\conv \cS \subset \cS'$ can be strict in general.
In fact, for polytopic sets $\cX$ and $\cY$, our approach of multiplying valid inequalities simplifies to the first level of the Reformulation-Linearization Technique (RLT)~\cite{sherali1990hierarchy}, which does not yield the convex hull of $\cS$ if, e.g., $\cX := \cY := [0,1]^2$.

The convex hull of $\cS$ can be efficiently described when $\cY$ is a polytope with a small number of extreme points $\{\hat \by_j\}_{j \in \cJ}$.
Specifically, by using disjunctive-programming techniques~\cite{ceria1999convex}, it can be verified that
\begin{align}
\label{eq:ch}
\conv \cS = \left\{
\sum_{j \in \cJ} (\bx_j, \lambda_j \hat \by_j, \bx_j \hat \by_j^\top) :
\sum_{j \in \cJ} \lambda_j = 1, \
(\bx_j, \lambda_j) \in \tilde \cX \tforall j \in \cJ
\right\}.
\end{align}
Note that this (lifted) description is convex and also set based.
While our relaxation $\cS'$ has size proportional to the number $|\cI|$ of facets of $\cY$, this description of the convex hull has size proportional to the number $|\cJ|$ of extreme points of $\cY$.
For the SPP in GCS, the polytopes $\cY_v$ have $O(|\cE_v\inc| + |\cE_v\out|)$ facets and only $O(|\cE_v\inc| |\cE_v\out|)$ extreme points, and this difference can be relatively small if the graph is sparse.
However, in our experience the MICPs obtained with our method provide a better tradeoff between strength and size, and are typically much faster to solve.

\begin{remark}
That our relaxation $\cS'$ is not always the convex hull of $\cS$ should be fully expected.
In fact, for $\cX := [0,1]^n$ and $\cY := [0,1]^m$, the bilinear program
\begin{align}
\label{eq:bilinear_optimization}
\minimize \quad \bp^\top \bx + \bq^\top \by + \bx^\top \bR \by
\quad \subjectto \quad \bx \in \cX, \ \by \in \cY,
\end{align}
is NP-hard~\cite{punnen2015bipartite}, and equivalent to minimizing a linear function over $\cS$.
The equality $ \cS' = \conv \cS$ would then allow us to solve an NP-hard problem in polynomial time.
\end{remark}

\subsection{Geometric proof of Theorem~\ref{th:validity_micp}}
\label{sec:conic}

In the proof of Theorem~\ref{th:validity_micp} we have shown the correctness of the MICP~\eqref{eq:micp} by analyzing all the feasible values that the variables in this program can take.
We now present a simple property of the relaxation $\cS'$ that will lead to a geometric and more concise proof of Theorem~\ref{th:validity_micp}.
This result will also generalize a known property of RLT.

\begin{lemma}
\label{lemma:extreme}
Let $\cY$ be a polytope and $\hat \by$ one of its extreme points.
We have $(\bx, \hat \by, \bZ) \in \cS$ if and only if $(\bx, \hat \by, \bZ) \in \cS'$.
\end{lemma}

\begin{proof}
One direction follows from $\cS \subseteq \cS'$.
For the other direction we show that if $(\bx, \hat \by, \bZ) \in \cS'$ then $\bZ = \bx \hat \by^\top$.
Since $\hat \by$ is an extreme point of $\cY$, there are $m$ linearly independent inequalities that are active at $\hat \by$.
Let $\bC \in \R^{m \times m}$ and $\bd \in \R^m$ collect the coefficients $(\bc_i, d_i)$ of these inequalities, so that $\bC \hat \by + \bd = \bzero$.
For the same inequalities, the constraints in~\eqref{eq:relaxation_polytopic} give us $\bZ \bc_i  + d_i \bx = \bzero$ or, equivalently, $\bZ \bC^\top  + \bx \bd^\top = \bzero$.
We then have $\bZ \bC^\top  = \bx \hat\by^\top \bC^\top$ and, since $\bC$ is invertible, $\bZ = \bx \hat \by^\top$.
\end{proof}

\begin{proof}[Alternative proof of Theorem~\ref{th:validity_micp}]
As in the first proof of Theorem~\ref{th:validity_micp}, note that the optimal solution of the MICP~\eqref{eq:micp} is such that the edges traversed by a unit of flow identify a path $p$.
Note also that, for all $v \in \cV$, the flow vectors $\by_v$ corresponding to a path $p$ are extreme points of the polytopes $\cY_v$.
Then the validity of the MICP~\eqref{eq:micp} follows since our relaxation is exact in these points by Lemma~\ref{lemma:extreme}.
\end{proof}

\begin{remark}
Consider the bilinear program~\eqref{eq:bilinear_optimization} with polytopic sets $\cX$ and $\cY$, and the additional constraint $\by \in \{0,1\}^m$.
Assuming $\cY \subseteq [0,1]^m$, the first-level RLT is known to yield a valid mixed-integer linear formulation of this program~\cite[Theorem~1]{adams1990linearization}.
Lemma~\ref{lemma:extreme} extends this result to generic closed convex sets $\cX$.
In fact, $\cY \subseteq [0,1]^m$ ensures that any vector $\by \in \cY \cap \{0,1\}^m$ is an extreme point of $\cY$, and the relaxation $\cS'$ is exact in correspondence of these points.
\end{remark}

\subsection{Related relaxation techniques}

The basic idea of generating new valid constraints by multiplying existing ones is classical, and has many incarnations: from the simple McCormick envelope~\cite{mccormick1976computability} to semidefinite hierarchies for polynomial optimization~\cite{lasserre2001global,parrilo2003semidefinite}, passing through RLT~\cite{sherali1990hierarchy}.
Among this family of techniques, the Lov\'asz-Schrijver hierarchy~\cite{lovasz1991cones} is the closest to ours, since it is set based and includes constraints of the form~\eqref{eq:relaxation_polytopic}; see~\cite[Theorem~1.6 and Conditions~(iii) to (iii'')]{lovasz1991cones}.
However, this hierarchy focuses on binary optimization and symmetric quadratic maps, and its naive application to the bilinear set $\cS$ would produce multiple redundant variables and constraints.
Our approach leverages the bilinear structure of the set $\cS$, that emerges naturally in the SPP in GCS, to construct a relaxation $\cS'$ that is smaller and as tight as the first level of the Lov\'asz-Schrijver hierarchy, without semidefinite constraints.
(As discussed in Section~\ref{sec:alternative_formulations}, our practical experience is that higher levels of the hierarchy and semidefinite constraints lead to MICPs that, although stronger, are significantly slower to solve.)

If both the sets $\cX$ and $\cY$ are polytopes, the convex hull of $\cS$ in~\eqref{eq:ch} is also a polytope, and its extreme points are $\ext \cS = \{(\bx, \by, \bx \by^\top): \bx \in \ext \cX, \by \in \ext \cY\}$.
In general, this yields an exponential-size description of $\conv \cS$.
Nevertheless, if the sets $\cX$ and $\cY$ have further special structure then specialized techniques can be applied to efficiently generate additional valid inequalities for $\conv \cS$; see, e.g., the techniques developed for network-interdiction problems~\cite{davarnia2017simultaneous}, pooling problems~\cite{gupte2017relaxations}, bipartite bilinear programs~\cite{dey2019new}, and bipartite boolean quadratic programs~\cite{sripratak2022bipartite}.

The recent work~\cite{zhen2021extension} shows how perspective functions can be used to allow the multiplication of nonlinear convex constraints in the RLT algorithm.
However, the relaxation in that work is not set based, and requires an explicit analysis of all the possible products of basic cone inequalities.

\section{Control applications}
\label{sec:optimal_control}
A main application of the framework presented in this paper is optimal control of discrete-time dynamical systems.
In this section we show how two simple control problems can be cast as SPPs in GCS.
These examples illustrate some basic modeling techniques that can also be applied to control problems involving more complex discrete decision making.

\subsection{Minimum-time control}

Consider the linear dynamical system $\bs_{\tau+1} = \bA \bs_\tau + \bB \ba_\tau$, where $\bs_\tau \in \R^q$ and $\ba_\tau \in \R^r$ are the system state and control action at time step $\tau$.
Given an initial state $\bs_0$, we look for a sequence of controls that drives the system state to the origin in the minimum number $T$ of time steps.
At each time $\tau$, the state and control pair $(\bs_\tau, \ba_\tau)$ is constrained in a compact convex set $\cD$.

To formulate this problem as an SPP in GCS we proceed as in Figure~\ref{fig:minimum_time}.
The vertices $\cV$ in our graph are ordered in a sequence.
The source $s$ is the first vertex and the target $t$ is the last.
The number of vertices is equal to $\bar T+1$, where $\bar T$ is a given upper bound on the optimal time horizon $T$.
Each vertex that is not the target has two outgoing edges: one that connects it to the next vertex in the sequence and one that goes to the target.
For each $v \in \cV$, the continuous variable $\bx_v$ represents a state and control pair $(\bs_v, \ba_v)$.
These variables are constrained by the following sets: $\cX_s :=\cD \cap (\{\bs_0\} \times \R^r)$ for the source, $\cX_t := \{(\bzero,\bzero)\}$ for the target (the value of $\ba_t$ is actually irrelevant), and $\cX_v := \cD$ for all the other vertices.
To minimize the number of edges in the optimal path (i.e., the time steps to reach the origin), the length of each edge $(u,v)$ is $1$ if $\bs_{v} = \bA \bs_u + \bB \ba_u$ and infinite otherwise.
(See Example~\ref{ex:perspective_extended_values} for the perspective of such a function.)

The solution of the MICP~\eqref{eq:micp} gives us a path $p := (v_0, \ldots, v_K)$.
The optimal time horizon is $T := K$, and the corresponding control sequence is $\ba_\tau := \ba_{v_\tau}$ for $\tau = 0, \ldots, T-1$.
The state trajectory is retrieved similarly, and is such that $\bs_T := \bs_t = \bzero$.

\begin{figure}[t]
\centering
\subfloat[Minimum-time problem.]{\label{fig:minimum_time}\includegraphics[height=3.5cm]{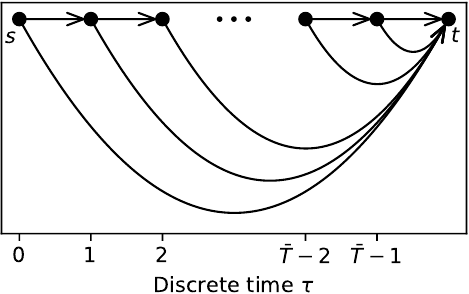}} \quad
\subfloat[Control of a PWA system.]{\label{fig:pwa}\includegraphics[height=3.5cm]{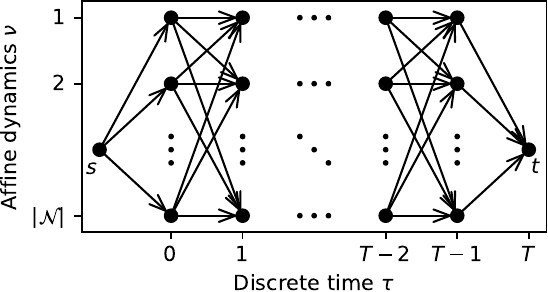}}
\caption{Graphs for the formulation of the optimal-control problems in Section~\ref{sec:optimal_control} as SPPs in GCS.}
\end{figure}

\subsection{Control of hybrid systems}
\label{sec:control_pwa}

PieceWise-Affine (PWA) systems are a popular  framework for modeling hybrid dynamics.
Loosely speaking, almost any dynamical system whose nonlinearity is exclusively due to discrete logics can be written in PWA form~\cite{heemels2001equivalence}.
Among the many applications of PWA systems, we have automotive~\cite{borrelli2006mpc}, power electronics~\cite{geyer2008hybrid}, and robotics~\cite{marcucci2017approximate}.
Given a finite collection $\{\cD_\nu\}_{\nu \in \cN}$ of compact convex subsets of the state and control space, a PWA system has dynamics $\bs_{\tau+1} = \bA_{\nu_\tau} \bs_\tau + \bB_{\nu_\tau} \ba_\tau + \bc_{\nu_\tau}$ if $(\bs_\tau, \ba_\tau) \in \cD_{\nu_\tau}$.
The index $\nu_\tau \in \cN$ represents the system discrete \emph{mode} at time $\tau$, which is itself a decision variable.
We consider the problem of driving a PWA system from a given initial state $\bs_0$ to the origin, in a fixed number $T$ of time steps.
The objective is to minimize the sum of the stage costs $\gamma(\bs_\tau, \ba_\tau)$ for $\tau=0, \ldots, T-1$.
The function $\gamma$ is convex and finite.

We model this problem through the GCS in Figure~\ref{fig:pwa}.
The source $s$ is the leftmost vertex and the target $t$ is the rightmost.
In between, we have $T$ layers with $|\cN|$ vertices each.
The source is connected via an edge to each vertex in the first layer, and all the vertices in the last layer are connected to the target.
Each pair of consecutive layers is fully connected.
Also in this case the continuous variables $\bx_v$ represent state and control pairs $(\bs_v, \ba_v)$.
The source is paired with the set $\cX_s := \{(\bs_0,\bzero)\}$, the target with $\cX_t := \{(\bzero,\bzero)\}$, and the $\nu$th vertex $v$ of each layer with $\cX_v := \cD_\nu$.
To enforce the initial conditions,  the edges $(s,v)$ outgoing from the source have zero length if $\bs_v = \bs_s$, and infinite length otherwise.
(Note that here the values of both $\ba_s$ and $\ba_t$ are irrelevant.)
The length of any other edge $(u,v)$, where $u$ is the $\nu$th vertex in its layer, is $\gamma(\bs_u, \ba_u)$ if $\bs_v = \bA_\nu \bs_u + \bB_\nu \ba_u + \bc_\nu$ and infinite otherwise.

A shortest path $p := (v_0, \ldots, v_K)$ has now $T+2$ vertices.
The optimal control at time $\tau = 0, \ldots, T-1$ is $\ba_\tau := \ba_{v_{\tau+1}}$.
The state trajectory is defined similarly.

\begin{remark}
Frequently in optimal control we need to enforce convex terminal constraints of the form $\bs_T \in \cD_T$, as well as convex terminal penalties $\gamma_T(\bs_T)$.
These are easily incorporated in our construction through a suitable modification of the set $\cX_t$ and the lengths of the edges incoming to the target vertex.
\end{remark}

\begin{remark}
The size of the GCS we just constructed is linear in the time horizon $T$ and quadratic in the number $|\cN|$ of discrete modes.
Conversely, common formulations for these problems have size linear in both $T$ and $|\cN|$~\cite{marcucci2019mixed}.
We will see in Section~\ref{sec:example_pwa} that the greater strength of our MICPs can be well worth this price.
\end{remark}

\section{Numerical results}
\label{sec:examples}
This section collects multiple numerical experiments.
We start in Section~\ref{sec:example_2d} with a simple two-dimensional problem.
Section~\ref{sec:example_large_scale} presents a statistical analysis of the performance of our MICP on large-scale instances of the SPP in GCS.
In Section~\ref{sec:example_pwa} we compare our approach with state-of-the-art mixed-integer formulations for control.
Finally, in Section~\ref{sec:example_failures} we use a carefully designed problem to show how symmetries in the GCS can loosen the relaxation of our MICP.

The code necessary to reproduce these results is available at \url{https://github.com/TobiaMarcucci/shortest-paths-in-graphs-of-convex-sets}.
All the experiments are run using the commercial solver MOSEK~10.0 with default options on a laptop computer with processor 2.4 GHz 8-Core Intel Core i9 and memory 64 GB 2667 MHz DDR4.
A mature implementation of the techniques presented in this paper is also provided by the open-source software Drake~\cite{tedrake2019drake}.

\subsection{Two-dimensional example}
\label{sec:example_2d}
We consider the two-dimensional problem in Figure~\ref{fig:2d_setup}.
We have a graph $G$ with $|\cV|=9$ vertices, $|\cE|=22$ edges, and multiple cycles.
The source $\cX_s := \{\btheta_s\}$ and target $\cX_t := \{\btheta_t\}$ sets are single points, while the remaining regions are full dimensional.
The geometry of the sets $\cX_v$ and the edge set $\cE$ can be deduced from Figure~\ref{fig:2d_setup}.
As edge lengths we consider the Euclidean distance~\eqref{eq:2norm} and the Euclidean distance squared~\eqref{eq:2norm_squared}, whose corresponding shortest paths are shown in Figure~\ref{fig:2d_setup} in orange and blue.
As expected, the first path is almost straight, while the lengths of the segments in the second are better balanced.

In Figure~\ref{fig:2d_results} we compare the optimal values of the SPP in GCS, the relaxation of our MICP~\eqref{eq:micp}, and the relaxation of the  McCormick formulation~\eqref{eq:mc_relaxation}.
Both relaxations are Second-Order-Cone Program (SOCPs), and for the McCormick one the bounding boxes $\cB_v$ are chosen as small as the corresponding sets $\cX_v$ allow.
We run this comparison for different values of a parameter $\sigma>0$ that controls the volume of the sets $\cX_v$.
The value $\sigma=1$ corresponds to the GCS in Figure~\ref{fig:2d_setup}.
While for $\sigma \neq 1$ each set $\cX_v$ is shrunk or enlarged via a uniform scaling, with scale factor $\sigma$, relative to a fixed Chebyshev center of the set.

\begin{figure}[t]
\centering
\subfloat[GCS with sets of nominal size, $\sigma =1$. The optimal solutions for the edge lengths~\eqref{eq:2norm} and~\eqref{eq:2norm_squared} are shown in orange and blue, respectively.]{\label{fig:2d_setup}\includegraphics[height=6cm]{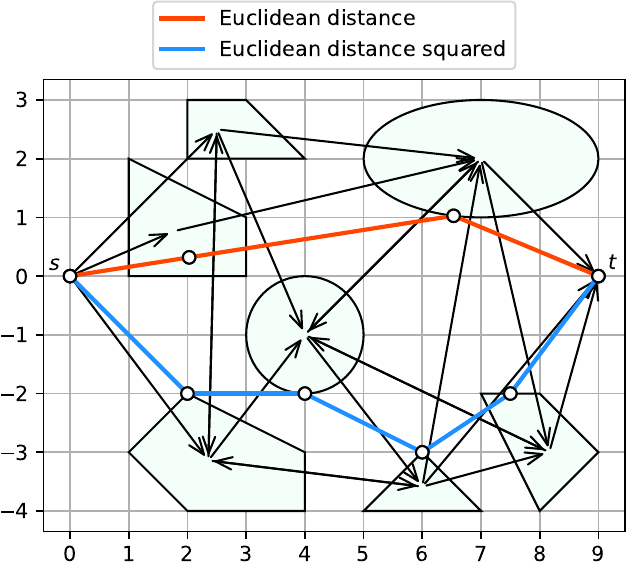}} \quad
\subfloat[Optimal values of the SPP in GCS and its convex relaxations as functions of the edge length and the size of the sets.]{\label{fig:2d_results}\includegraphics[height=6cm]{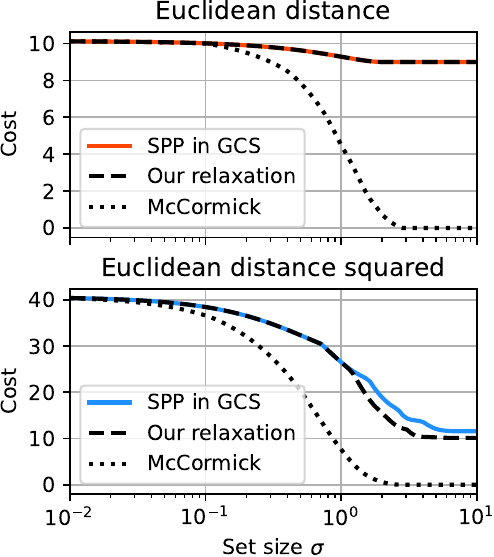}}
\caption{Two-dimensional SPP in GCS from Section~\ref{sec:example_2d}.
The tightness of the convex relaxation of our MICP~\eqref{eq:micp} is analyzed for two edge lengths (the Euclidean distance~\eqref{eq:2norm} and the Euclidean distance squared~\eqref{eq:2norm_squared}) and different sizes of the sets $\cX_v$ (parameterized by the scalar $\sigma$).
As a baseline, we also report the optimal value of the relaxation of the McCormick formulation~\eqref{eq:mc_relaxation}.
}
\end{figure}

When the edge length is the Euclidean distance~\eqref{eq:2norm}, the top panel in Figure~\ref{fig:2d_results} shows that our relaxation is exact for all values of $\sigma$.
This was expected for $\sigma$ close to zero, since by Remark~\ref{rem:singletons} our relaxation is exact when the sets are singletons.
Similarly, the problem is trivial for very large $\sigma$, when the regions are so big that, no matter the discrete path we take, we can always reach the target via a straight line.
However, that our relaxation is exact for all the intermediate values of $\sigma$ is not an obvious result.
The McCormick relaxation is also exact for small $\sigma$, but gives a trivial lower bound of zero when the sets are large.

With the Euclidean length squared~\eqref{eq:2norm_squared}, both relaxations are still guaranteed to be tight as $\sigma$ goes to zero.
This is confirmed by the bottom panel of Figure~\ref{fig:2d_results}.
When $\sigma$ is very large, we have seen in Section~\ref{sec:complexity} that our problem is equivalent to the HPP, and the argument from Theorem~\ref{th:complexity} shows that its optimal value is $\|\btheta_t - \btheta_s\|_2^2/K = 11.6$, where $K=7$ is the number of edges in the longest $s$-$t$ path in the graph in Figure~\ref{fig:2d_setup}.
A close inspection of the bottom of Figure~\ref{fig:2d_results} reveals that, for large $\sigma$, our relaxation yields the lower bound $\|\btheta_t - \btheta_s\|_2^2/(|\cV|-1) = 10.1$, which corresponds to the simple inequality $K \leq |\cV| -1$.
(Using a duality argument, it can be verified that our relaxation always recovers this bound.)
Conversely, the lower bound provided by the McCormick relaxation is again equal to zero.

\subsection{Large-scale random instances}
\label{sec:example_large_scale}
We present a statistical analysis of the performance of our formulation.
We generate a variety of random large-scale SPPs in GCS, and we analyze the relaxation tightness and the solution times of the MICP~\eqref{eq:micp} as functions of various problem parameters.
We stress that generating random graphs representative of the ``typical'' SPP in GCS we might encounter in practice is a difficult operation.
Inevitably, the instances we describe below are not completely representative, and our algorithm might perform worse or better on other classes of random graphs.
Our goal here is to show that our MICP is not limited to small-scale problems.

We construct an SPP in GCS as follows.
We set $\cX_s := \{\bzero\}$ and $\cX_t := \{\bone\}$.
The rest of the sets $\cX_v$ are axis-aligned cubes with volume $\Lambda$ and center drawn uniformly at random in $[0,1]^n$.
Given a number $|\cE|$ of edges, we construct the edge set in two steps.
First we generate multiple $s$-$t$ paths such that every vertex $v \neq s,t$ is traversed exactly by one path.
These are determined via a random partition of the set $\cV - \{s, t\}$: the number of sets in the partition (number of paths) is drawn uniformly from the interval $[1, |\cV| - 2]$, and also the number of vertices in each set (length of each path) is a uniform random variable.
Then we extend the edge set by drawing edges uniformly at random from the set $\{(u,v) \in \cV^2 : v \neq s, u \neq t, u \neq v \}$ until a desired cardinality $|\cE|$ is reached.
As edge lengths we consider the Euclidean distance~\eqref{eq:2norm} and the Euclidean distance squared~\eqref{eq:2norm_squared}, which both make our formulation~\eqref{eq:micp} a mixed-integer SOCP.

For each edge length, we first solve $100$ random instances with the following nominal parameters: volume $\Lambda = 0.01$, $n = 4$ dimensions, $|\cV| = 50$ vertices, and $|\cE|=100$ edges.
Then we solve four other batches of $100$ problems where, in each batch, a different subset of these parameters is increased by a factor of $5$.
Specifically, these additional batches test our formulation in case of large sets $\cX_v$ ($\Lambda$ from $0.01$ to $0.05$), high dimensions ($n$ from $4$ to $20$), dense graphs ($|\cE|$ from $100$ to $500$), and large graphs ($|\cV|$ and $|\cE|$ from $50$ and $100$ to $250$ and $500$).
To give an idea of what these problems look like, the projection onto two dimensions of a GCS generated using the nominal parameters is shown in Figure~\ref{fig:random_instance}.

\begin{figure}[t]
\centering
\includegraphics[height=3.2cm]{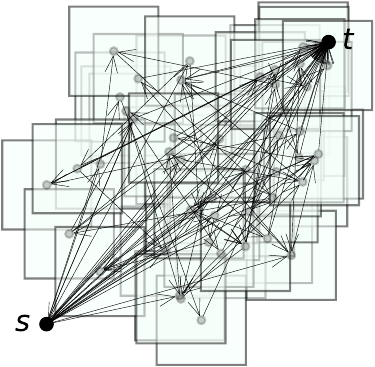}
\caption{Projection onto two dimensions of a random instance of the SPP in GCS from Section~\ref{sec:example_large_scale}. The problem parameters have nominal value.}
\label{fig:random_instance}
\end{figure}

\begin{figure}[t]
\centering
\includegraphics[width=.99\columnwidth]{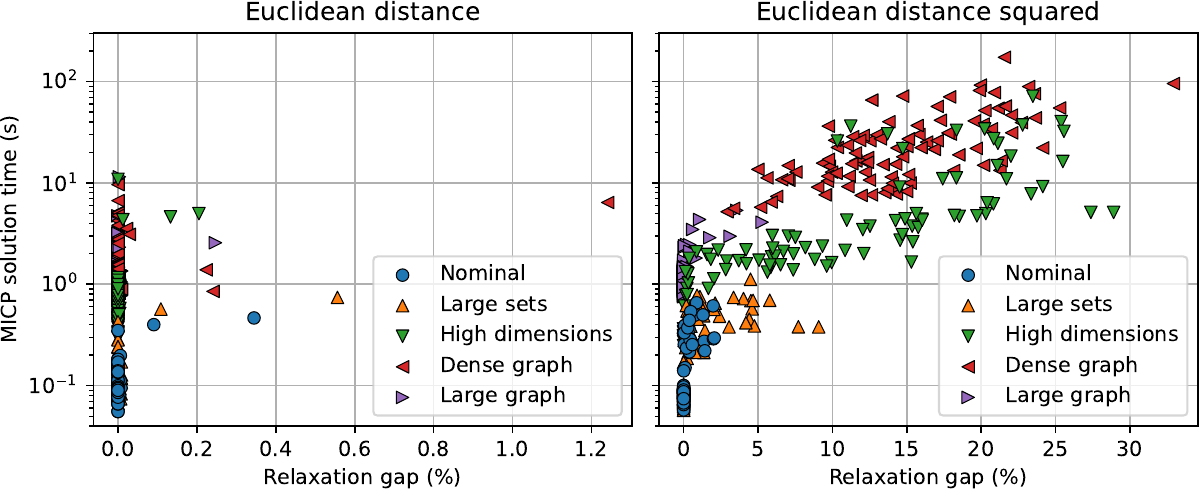}
\caption{Relaxation gap versus MICP solution time for the 500 random instances described in Section~\ref{sec:example_large_scale}.
Two edge lengths are analyzed: the Euclidean distance~\eqref{eq:2norm} and the Euclidean distance squared~\eqref{eq:2norm_squared}.
For each edge length, 100 nominal instances are generated with the nominal problem parameters, and four other batches of 100 instances each are obtained by increasing a different subset of the parameters.
Our relaxation is almost always exact with the Euclidean length.
While, with the Euclidean length squared, it is more sensitive to the dimension $n$ of the space and the density of the graph $G$.
(Note the different horizontal scales of the two plots.)
}
\label{fig:statistical_analysis}
\end{figure}

Figure~\ref{fig:statistical_analysis} shows the relaxation gap (cost gap between the MICP and its relaxation, normalized by the MICP cost) versus the MICP solution time for all the instances described above.
As observed in the previous example, the Euclidean edge length~\eqref{eq:2norm} results in easier programs: our relaxation is tight in almost all the instances and the solution times are relatively low.
The squared edge length~\eqref{eq:2norm_squared} leads to more challenging problems, even though the maximum relaxation gap and  runtime are only $2.1\%$ and $0.66$s in the nominal case.
When the volume of the cubes $\cX_v$ is increased to $\Lambda=0.05$ these values increase to $9.1\%$ and $1.12$s, and the performance of our MICP is minimally affected.
Note that this is not in contrast with the previous example, where we analyzed the regime of extremely large sets $\cX_v$.
Note also that the volume of the sets does not affect the MICP size.
The growth of the space dimension to $n=20$ increases the size of our programs, and also loosens the relaxation.
The largest relaxation gap is $28.9\%$, and our MICP takes $72$s to be solved in the worst case.
Similarly, when the number $|\cE|$ of edges is increased to $500$ the maximum relaxation gap and runtime become $32.9\%$ and $174$s.
This is due to the combination of the quadratic edge length and the large number of cycles that we have in a graph with high density of edges $|\cE|/|\cV|$.
To show this, in the last batch of problems we keep $|\cE|=500$ and we increase the number of vertices to $|\cV| = 250$.
This increases the MICP size further but makes the graph sparser, reducing the maximum relaxation gap and runtime to $5.3\%$ and $5.4$s.

Also for the problems in this analysis our formulation outperforms the McCormick one in~\eqref{eq:mc_relaxation}.
With the nominal parameters, the McCormick median (maximum) runtime is $12.9$ ($4.3$) times larger than ours for the Euclidean length~\eqref{eq:2norm}, and $10.3$ ($2.7$) times larger for the Euclidean length squared~\eqref{eq:2norm_squared}.
This performance difference grows larger for the other batches of problems, where the McCormick formulation reaches our time limit of one hour very often.
The slowness of the McCormick approach is due to its loose relaxation: even with the nominal parameters, we have a median (maximum) relaxation gap of $29\%$ ($52\%$) for~\eqref{eq:2norm}, and $34\%$ ($58\%$) for~\eqref{eq:2norm_squared}.

\subsection{Optimal control}
\label{sec:example_pwa}
We apply the method from Section~\ref{sec:control_pwa} to solve the optimal-control problem shown in Figure~\ref{fig:footstep_micp}.
We have a mechanical system with position $\bq \in \R^2$, velocity $\bv \in \R^2$, and force $\ba \in \R^2$.
The system has the dynamics of a double integrator: $\bq_{\tau + 1} = \bq_\tau + \bv_\tau$ and $\bv_{\tau + 1}  = \bv_\tau + \eta \ba_\tau$, where $\eta$ is a scalar parameter that regulates the system controllability.
The system state at time $\tau$ is $\bs_\tau := (\bq_\tau, \bv_\tau)$.
The initial  position is $\bq_0:= (0.5,-3.5)$ (green plus at the bottom left  of Figure~\ref{fig:footstep_micp}), the initial velocity is $\bv_0 := \bzero$.
At each time step $\tau=1, \ldots, T-1$, the position $\bq_\tau$ must belong to one of the seven regions in Figure~\ref{fig:footstep_micp}, while the velocity and the controls are limited by the constraints $\| \bv_\tau \|_\infty \leq 1$ and $\| \ba_\tau \|_\infty \leq 1$.
The goal is to reach the point $\bq_T := (6.5,3.5)$ (green cross at the top right of Figure~\ref{fig:footstep_micp}) with zero velocity $\bv_T$ in $T := 30$ time steps.
The cost function is the sum of the stage costs $\gamma(\bs_\tau, \ba_\tau) := \| \bv_\tau \|_2^2/5 + \| \ba_\tau \|_2^2$.

We let the parameter $\eta$ vary between the seven regions.
The five regions in the range $-5 \leq q_2 \leq 5$ (light blue in Figure~\ref{fig:footstep_micp}) have $\eta=1$.
While in the other two regions (red in Figure~\ref{fig:footstep_micp}) we make the system more expensive to control by setting $\eta=0.1$.
Since the parameter $\eta$ varies with the state, the system dynamics is PWA and the control problem falls into the class considered in Section~\ref{sec:control_pwa}.
The GCS beneath this problem (depicted in Figure~\ref{fig:pwa}) has $|\cV| = 212$ vertices and $|\cE| = 1435$ edges, and the convex sets $\cX_v$ live in $\R^6$.
Also in this case problem~\eqref{eq:micp} is a mixed-integer  SOCP.

\begin{figure}[t]
\centering
\subfloat[Optimal solution of the control problem.]{\label{fig:footstep_micp}\includegraphics[height=6cm]{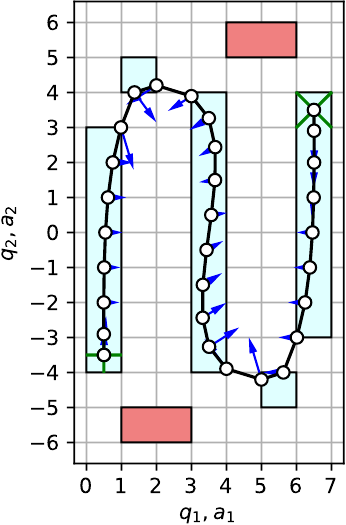}} \quad
\subfloat[Solution of the convex relaxation from~\cite{moehle2015perspective,marcucci2019mixed}. The relaxation gap is $93\%$, and the MICP is solved in $17$min.]{\label{fig:footstep_pf}\includegraphics[height=6cm]{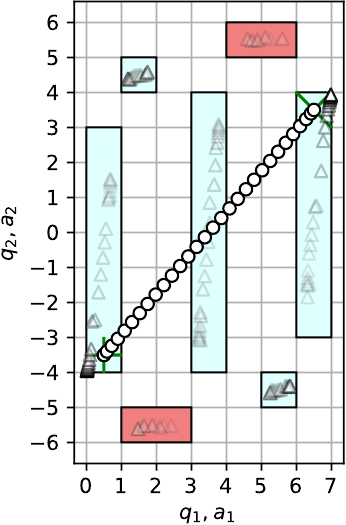}} \quad
\subfloat[Solution of our convex relaxation. The relaxation gap is $20\%$, and the MICP is solved in $7.1$s.]{\label{fig:footstep_spp}\includegraphics[height=6cm]{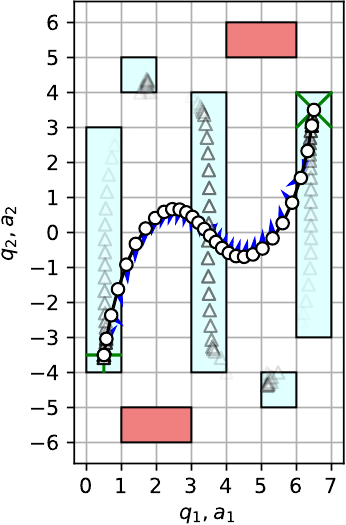}}
\caption{Control problem from Section~\ref{sec:example_pwa} of driving a dynamical system from start (green plus) to goal (green cross).
The light-blue and red regions have high and low controllability, respectively.
The optimal positions $\bq_\tau$ are white circles, the optimal controls $\ba_\tau$ are blue arrows.
The triangles are the auxiliary variables $\bq_\tau^\nu$ whose convex combination yields $\bq_\tau$.
The opacity of the triangles equals the optimal value of the variables $b_\tau^\nu$ that serve as weights in this convex combination.
}
\label{fig:footstep}
\end{figure}

Figure~\ref{fig:footstep_micp} shows the optimal trajectory $(\bq_0, \ldots, \bq_T)$ (white circles) and the optimal controls $(\ba_0, \ldots, \ba_{T-1})$ (blue arrows).
Geometrically, the red regions would be shortcuts to the goal, but the low controllability in these areas makes it too expensive not to fall out of the feasible set.
The optimal strategy is then to follow a winding trajectory and incur a cost of $9.37$.

As a baseline, we first solve the problem using the state-of-the-art perspective formulation from~\cite[Section~6]{moehle2015perspective} (see also~\cite[Section~5.2.2]{marcucci2019mixed}).
At each time step $\tau$, this expresses the system state $\bs_\tau$ as a convex combination of one auxiliary variable $\bs_\tau^\nu$ per region $\nu=1, \ldots, 7$.
The control $\ba_\tau$ is decomposed similarly.
When the coefficients $b_\tau^\nu$ of this combination are required to be binary, the solver is forced to make a hard selection of the region in which the system must be at each time step.
When the coefficients $b_\tau^\nu$ can be fractional, the system evolves according to a convex combination of the dynamics in each region.
Figure~\ref{fig:footstep_pf} illustrates the solution of the convex relaxation of this formulation (which, thanks to a perspective reformulation of the stage cost, is also an SOCP).
It reports the position $\bq_\tau$, the barely visible controls $\ba_\tau$, and the auxiliary copies $\bq_\tau^\nu$ of the position vector.
The latter have triangular markers and opacity equal to the value of the indicator $b_\tau^\nu$.
As it can be seen, this relaxation is insensitive to the arrangement of the regions, and its optimal trajectory heads straight to the goal.
Also the indicator variables $b_\tau^\nu$ are uninformative, and take nonzero value in the regions with low controllability (visible triangles in the red regions).
The optimal value of this relaxation is $0.67$, which is only $7\%$ of the MICP value ($93\%$ relaxation gap).
The MICP solution time is $1011\text{s}\approx 17$min.

The convex relaxation of our formulation is much tighter: its optimal value is $7.46$, which is $80\%$ of the MICP value ($20\%$ relaxation gap).
This has a dramatic effect on computation times that are now reduced to $7.1$s.
To make a plot comparable to Figure~\ref{fig:footstep_pf} we leverage the structure of our GCS in Figure~\ref{fig:pwa}.
The equivalent of the indicator variable $b_\tau^\nu$ is the total flow traversing the $\nu$th vertex in the $\tau$th layer of the graph.
Similarly, the position of the same vertex plays the role of the auxiliary variables $(\bs_\tau^\nu, \ba_\tau^\nu)$, which can then be combined using the coefficients $b_\tau^\nu$ to get candidate values for the state $\bs_\tau$ and the control $\ba_\tau$.
Figure~\ref{fig:footstep_spp} illustrates these values, and shows that the trajectory reconstructed from our relaxation resembles the MICP solution in Figure~\ref{fig:footstep_micp} much more closely.
All the markers in the regions with low controllability are now invisible, indicating that our relaxation correctly identifies these as regions of high cost.
The visible points $\bq_\tau^\nu$ are clustered along the optimal trajectory of the MICP, suggesting that our relaxation contains detailed information about the optimal path to reach the goal.

\subsection{Symmetries in the GCS}
\label{sec:example_failures}
\begin{figure}[t]
\centering
\subfloat[Optimal solution of the MICP, with the optimal vertex positions connected by orange lines.]{\quad\quad\label{fig:symmetry_mip}\includegraphics[height=4cm]{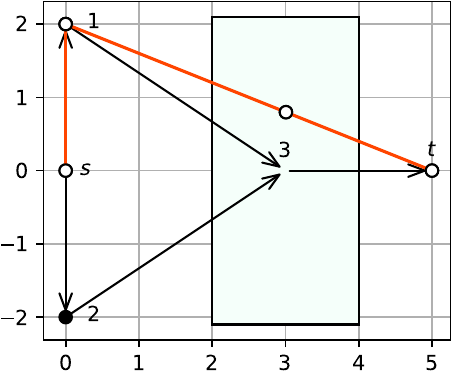}\quad\quad} \quad
\subfloat[Optimal solution of the relaxation.
For each edge $e = (u,v)$, the orange line connects the surrogates $\bar \bz_e$ and $\bar \bz_e'$ of the vertex positions $\bx_u$ and $\bx_v$, and is labeled with the flow $y_e$.]{\quad\quad\label{fig:symmetry_relaxation}\includegraphics[height=4cm]{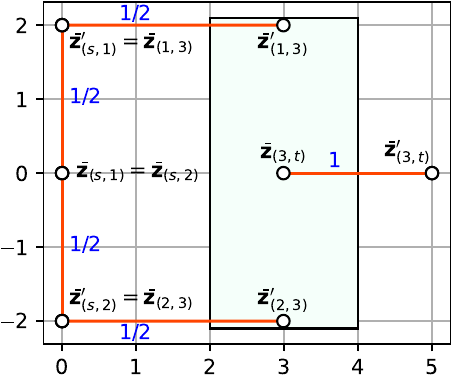}\quad\quad}
\caption{Instance of the SPP in GCS from Section~\ref{sec:example_failures} that shows how symmetries in the GCS can deteriorate the convex relaxation of our MICP.
For the relaxation, the cost contribution of edge $e$ is obtained by multiplying the flow $y_e$ by the distance between $\bar \bz_e$ and $\bar \bz_e'$.
Since only the mean of $\bar \bz_{(1,3)}'$ and $\bar \bz_{(2,3)}'$ is required to match $\bar \bz_{(3,t)}$, the cost is minimized by moving these two points closer to $\bar \bz_{(1,3)}$ and $\bar \bz_{(2,3)}$, respectively.
}
\label{fig:symmetry}
\end{figure}

We conclude by showing how symmetries in the GCS can deteriorate the convex relaxation of our MICP and, in principle, make it arbitrarily loose.
We illustrate this through the following carefully designed problem.

We consider the SPP in GCS depicted in Figure~\ref{fig:symmetry_mip}.
We have an acyclic graph with $|\cV| = 5$ vertices and $|\cE| = 5$ edges.
All the sets $\cX_v$ are singletons $\{ \btheta_v \}$, except for $\cX_3$ which is a full-dimensional rectangle.
As an edge length, we use the Euclidean distance~\eqref{eq:2norm}.
Solving this problem, we obtain the optimal path $p = (s,1,3,t)$ with length $7.4$ (the symmetric solution $p = (s,2,3,t)$ would also be optimal).
The corresponding vertex positions are connected by an orange line in Figure~\ref{fig:symmetry_mip}.

Figure~\ref{fig:symmetry_relaxation} illustrates the solution of the relaxation of the MICP~\eqref{eq:micp}.
For each edge $e$, we connect the optimal points $\bar \bz_e := \bz_e / y_e$ and $\bar \bz_e' := \bz_e' / y_e$ with an orange line, labeled in blue with the corresponding flow $y_e$.
Note that, for $y_e > 0$, we have $\tilde \ell_e (\bz_e, \bz_e', y_e) = \ell_e (\bar \bz_e, \bar \bz_e') y_e$, and the vectors $\bar \bz_e$ and $\bar \bz_e'$ are the actual points where the length of the edge $e$ is evaluated.
Note also that, by~\eqref{eq:micp_nonnegativity}, we have $\bar \bz_e \in \cX_u$ and $\bar \bz_e' \in \cX_v$.
The relaxation splits the unit of flow injected in the source into two: half unit is shipped to the target via the top path, the other half via the bottom path.
The optimal value of this convex program is $7.0$.

The looseness of the relaxation can be explained as follows.
If we denote with $\rho$ the flow traversing edge $(1,3)$, the flow conservation gives $y_{(2,3)} = 1 - \rho$, while the flow through the edge $(3,t)$ is always one.
Since the variables $\bar \bz_{(1,3)}$, $\bar \bz_{(2,3)}$, and $\bar \bz_{(3,t)}'$ are forced to match $\btheta_1$, $\btheta_2$, and $\btheta_t$, respectively, the cost terms in~\eqref{eq:micp_objective} corresponding to the edges $(1,3)$, $(2,3)$, and $(3,t)$ read
\begin{align}
\label{eq:symmetry_objective}
\rho \| \bar \bz_{(1,3)}' - \btheta_1 \|_2 + (1 - \rho) \| \bar \bz_{(2,3)}' - \btheta_2 \|_2 + \| \btheta_t - \bar \bz_{(3,t)} \|_2.
\end{align}
The only constraint that links these variables is~\eqref{eq:micp_conservation} for $v=3$, which gives $\rho \bar \bz_{(1,3)}' + (1 - \rho) \bar \bz_{(2,3)}' = \bar \bz_{(3,t)}$.
When $\rho = 1/2$, this constraint asks the mean of $\bar \bz_{(1,3)}'$ and $\bar \bz_{(2,3)}'$ to match $\bar \bz_{(3,t)}$, as opposed to forcing either one of the first two points to match the third, as it would be for $\rho \in \{0,1\}$.
Therefore, while keeping their mean equal to $\bar \bz_{(3,t)}$, the points $\bar \bz_{(1,3)}'$ and $\bar \bz_{(2,3)}'$ can move vertically, and get closer to $\btheta_1$ and $\btheta_2$.
This reduces the first two terms in~\eqref{eq:symmetry_objective}, and keeps the third term unchanged.

Although this example leads to a relaxation gap of only $5\%$, a simple variation of it shows that our relaxation can be arbitrarily loose.
In particular, if we let $\ell_{(s,1)} := \ell_{(s,2)} := 0$ and we shift the centers of the sets $\cX_3$ and $\cX_t$ to the origin, then the cost of the MICP and its relaxation are reduced to $2$ and $0$, and the relaxation gap becomes $100\%$.
Nevertheless, we emphasize that this is a contrived problem, and the instances we encounter in practice lead to these phenomena very rarely.

\section{Conclusions}
\label{sec:conclusions}
In this paper we have introduced the SPP in GCS, a versatile generalization of the classical SPP.
Our main contribution is a compact MICP formulation for the solution of this NP-hard problem.
Numerical experiments show that the convex relaxation of our formulation is typically very tight, and it enables us to quickly solve large problems to global optimality.
We have demonstrated  the applicability of the proposed framework to control systems: many optimal control problems are interpretable as SPPs in GCS and, in our tests, the proposed formulation outperforms state-of-the-art techniques for their solution.

\section*{Acknowledgments}

We would like to thank Hongkai Dai for all the time spent improving the solver interface used in the numerical experiments of this paper.

\bibliographystyle{siamplain}
\bibliography{references}

\appendix

\section{Other graph problems in GCS}
\label{sec:extensions}
Existing exact algorithms for graph problems with neighborhoods rely on expensive mixed-integer nonconvex optimization~\cite{gentilini2013travelling,burdick2021multi,blanco2017minimum}.
Here we show that, under standard convexity assumptions, the techniques from Section~\ref{sec:relaxation} apply beyond the SPP, and yield exact MICP reformulations for a wide variety of graph problems.
A thorough numerical evaluation of these novel formulations will be the object of future works.

Given a directed graph $G:=(\cV,\cE)$, many combinatorial problems require finding a set of edges $\cE^\star \subseteq \cE$ that is optimal according to a given criterion and given feasibility conditions.
Typically, these are formulated as integer linear programs of the form
\begin{align}
\label{eq:ilp}
\minimize \quad \sum_{e \in \cE} l_e y_e \quad \subjectto \quad \by \in \cY \cap \{0,1\}^{|\cE|},
\end{align}
where $\by := (y_e)_{e \in \cE}$.
The edge set $\cE^\star$ is parameterized by the variables $y_e$ as $\cE^\star = \{e \in \cE : y_e = 1\}$,
the polyhedron $\cY \subseteq [0,1]^{|\cE|}$ embodies the feasibility conditions, and
the cost is a linear function that assigns a weight $l_e \geq 0$ to each edge $e \in \cE$.

We extend the graph problem modeled by~\eqref{eq:ilp} to its version in GCS as done for the SPP.
We let the position $\bx_v \in \R^n$ of vertex $v$ be a decision variable, constrained in the set $\cX_v$, and we let the length of the edge $e = (u,v)$ be $\ell_e (\bx_u, \bx_v)$.
The sets $\cX_v$ and the functions $\ell_e$ satisfy the assumptions from Section~\ref{sec:statement}.
We define two auxiliary variables $\bz_e := y_e \bx_u$ and $\bz_e' := y_e \bx_v$ per edge $e = (u,v)$, and we formulate our graph problem in GCS as in~\eqref{eq:bilinear_spp}, with the condition $\by \in \cY \cap \{0,1\}^{|\cE|}$ in place of~\eqref{eq:bilinear_flow}.
This yields a mixed-integer program with bilinear constraints.
At this point, in the case of the SPP, we grouped the constraints in our problem vertex by vertex, and we applied the relaxation from Lemma~\ref{lemma:valid_constraint}.
However, in general, the polyhedron $\cY$ might not enjoy this convenient separability, as it might couple flows that do not share a common vertex.
There are two ways around this issue.

One option is just to separate the flow constraints that are vertex-wise separable from the ones that are not.
Using only the first to define the polyhedra $\cY_v \subseteq [0,1]^{|\cE_v|}$, we can then proceed as in Remark~\ref{rem:coincise_bilinear}.
The MICP we get is a valid problem formulation since any point in $\cY_v \cap \{0,1\}^{|\cE_v|}$ is an extreme point of $\cY_v$ and, by Lemma~\ref{lemma:extreme}, our relaxation is exact for those points.
The formulation resulting from this approach is compact but it might be weak.

The second option is to introduce new variables that represent the product of each flow $y_e$ and vertex position $\bx_v$, even if edge $e$ is not incident with vertex $v$.
This gives us a total of $n |\cV| |\cE|$ continuous variables $\bZ :=  \bx \by^\top$, where $\bx := (\bx_v)_{v \in \cV}$ lives in the Cartesian product $\cX := \prod_{v \in \cV} \cX_v$.
Defining the set $\cS$ as in~\eqref{eq:bilinear_set}, the constraints of our problem become $\by \in \{0,1\}^{|\cE|}$ and $(\bx, \by, \bZ) \in \cS$.
We then use the relaxation $\cS'$ of $\cS$ from~\eqref{eq:relaxation_polytopic} to get an MICP whose validity is ensured again by Lemma~\ref{lemma:extreme}.
This second option yields larger but potentially stronger MICPs.


\section{Dual optimization problem}
\label{sec:dual}
In this appendix we analyze the dual of the convex relaxation of the MICP~\eqref{eq:micp}, and we draw additional parallels between this problem and the network-flow formulation~\eqref{eq:network_flow} of the SPP.

\subsection{Dual of the SPP}
\label{sec:dual_lp}

As a reference for the discussion below, the dual of the LP~\eqref{eq:network_flow} is
\begin{subequations}
\label{eq:dual_lp}
\begin{align}
\maximize \quad & p_s - p_t \\
\label{eq:dual_lp_potential}
\subjectto \quad & p_u - p_v \leq l_e, && \forall e = (u,v) \in \cE.
\end{align}
\end{subequations}
Here $p_s$ and $p_t$ are the multipliers of the two constraints in~\eqref{eq:network_st}, and $p_v$ for $v \neq s,t$ are the multiplier of the flow conservation in~\eqref{eq:network_v}.
These multipliers are interpretable as potentials: the objective asks to maximize the potential jump between source and target, and the constraints ensure that the potential jump along each edge does not exceed the edge length.
Since the degree constraints in~\eqref{eq:network_v} are redundant (see Remark~\ref{rem:degree}), their multipliers do not appear in the dual problem.

For the LPs~\eqref{eq:network_flow} and~\eqref{eq:dual_lp}, complementary slackness reads $(l_e - p_u + p_v) y_e = 0$ for all edges $e =(u,v)$.
Therefore, at optimality, each edge $e \in \cE_p$ along the shortest path must have a potential jump equal to its edge length.

\subsection{Dual of the SPP in GCS}
\label{sec:dual_convex_relaxation}

The convex relaxation of the MICP~\eqref{eq:micp} is a conic program, and its dual is derived in the standard way.
To make the interpretation of the dual program easier, we assume that the graph $G$ is acyclic, and we remove the degree constraints~\eqref{eq:micp_degree} from the primal.
This leads to the following optimization problem:
\begin{subequations}
\label{eq:dual}
\begin{align}
\label{eq:dual_objective}
\maximize \quad & p_s - p_t \\
\label{eq:dual_potential}
\subjectto \quad
& \br_u^\top \bx_u + p_u - \br_v^\top \bx_v - p_v \leq \ell_e(\bx_u, \bx_v), \\
\nonumber & \qquad\qquad\qquad\qquad\qquad\qquad \ \ \forall \bx_u \in \cX_u, \bx_v \in \cX_v, e =(u,v) \in \cE, \\
\label{eq:dual_auxiliary}
& \br_s = \br_t = 0.
\end{align}
\end{subequations}
The dual variables are $p_v$ and $\br_v$ for all $v \in \cV$.
The first are paired with the flow constraints as above.
The second correspond to the portion of the flow conservation~\eqref{eq:micp_conservation} that involves the auxiliary variables $\bz_e$ and $\bz_e'$ (the additional variables $\br_s$ and $\br_t$ have only the role of simplifying the presentation).

Similarly to the LP~\eqref{eq:dual_lp}, the dual~\eqref{eq:dual} can be interpreted in terms of potentials.
For each vertex $v \in \cV$, the linear function $\br_v^\top \bx_v + p_v$ defines the potential of the point $\bx_v \in \cX_v$.
Because of~\eqref{eq:dual_auxiliary}, these functions are constant over the source and target sets, and the objective~\eqref{eq:dual_objective} maximizes the potential jump between $s$ and $t$ as in the classical SPP.
Like~\eqref{eq:dual_lp_potential}, constraint~\eqref{eq:dual_potential} asks the potential jump along an edge to be smaller than the edge length.
By setting all the potential functions to zero, we see that the dual problem is always feasible and has nonnegative optimal value.

For the primal-dual pair~\eqref{eq:micp} and~\eqref{eq:dual}, complementary slackness requires
$$
\br_u^\top \bz_e + p_u y_e - \br_v^\top \bz_e' - p_v y_e = \tilde\ell_e(\bz_e, \bz_e', y_e)
$$
for all edges $e=(u,v)$.
As for the classical SPP, this is trivially satisfied if $y_e = 0$.
While, for $y_e > 0$, we get $\br_u^\top \bar \bz_e + p_u - \br_v^\top \bar \bz_e' - p_v = \ell_e (\bar \bz_e, \bar \bz_e')$, with $\bar \bz_e := \bz_e/y_e$ and $\bar \bz_e' := \bz_e'/y_e$.
In words, at optimality, the potential jump along edge $e$ is tight to the edge length $\ell_e$ at the point $(\bar \bz_e, \bar \bz_e')$.

\end{document}